\newcommand{\tabincell}[2]{\begin{tabular}{@{}#1@{}}#2\end{tabular}}
\newcommand{\secref}[1]{Sec. \ref{#1}}
\theoremstyle{plain}
\theoremstyle{plain}
\newtheorem{theorem}{Theorem}
\theoremstyle{plain}
\newtheorem{proposition}{Proposition}
\theoremstyle{plain}
\newtheorem{corollary}{Corollary}
\theoremstyle{plain}
\newtheorem{assumption}{Assumption}
\theoremstyle{plain}
\newtheorem{lemma}{Lemma}
\begin{document}

\def\QEDclosed{\mbox{\rule[0pt]{1.3ex}{1.3ex}}}
\def\QEDopen{{\setlength{\fboxsep}{0pt}\setlength{\fboxrule}{0.2pt}\fbox{\rule[0pt]{0pt}{1.3ex}\rule[0pt]{1.3ex}{0pt}}}}
\def\QED{\QEDopen}
\def\proof{}
\def\endproof{\hspace*{\fill}~\QED\par\endtrivlist\unskip}
%

\title{Low-Complexity Eigenvector Prediction-based Precoding Matrix Prediction in Massive MIMO with Mobility}
%
%
%

\author{Ziao~Qin, Haifan~Yin,~\IEEEmembership{Senior Member,~IEEE}, and Weidong~Li 
\thanks{Z. Qin, H. Yin and W. Li are with School of Electronic Information and Communications, Huazhong University of Science and Technology, 430074 Wuhan, China (e-mail: ziao\_qin@hust.edu.cn, yin@hust.edu.cn, weidongli@hust.edu.cn).}

\thanks{This work was supported by the National Natural Science Foundation of China under Grant 62071191. The corresponding author is Haifan Yin.}}

\maketitle

\begin{abstract}
In practical massive multiple-input multiple-output (MIMO) systems, the precoding matrix is often obtained from the eigenvectors of channel matrices and is challenging to update in time due to finite computation resources at the base station, especially in mobile scenarios. In order to reduce the precoding complexity while enhancing the spectral efficiency (SE), a novel precoding matrix prediction method based on the eigenvector prediction (EGVP) is proposed. The basic idea is to decompose the periodic uplink channel eigenvector samples into a linear combination of the channel state information (CSI) and channel weights. We further prove that the channel weights can be interpolated by an exponential model corresponding to the Doppler characteristics of the CSI. A fast matrix pencil prediction (FMPP) method is also devised to predict the CSI. We also prove that our scheme achieves asymptotically error-free precoder prediction with a distinct complexity advantage. Simulation results show that under the perfect non-delayed CSI, the proposed EGVP method reduces floating point operations by 80\% without losing SE performance compared to the traditional full-time precoding scheme. In more realistic cases with CSI delays, the proposed EGVP-FMPP scheme has clear SE performance gains compared to the precoding scheme widely used in current communication systems.
\end{abstract}
\begin{IEEEkeywords}
massive MIMO, CSI, mobility, precoding matrix prediction, channel prediction.
\end{IEEEkeywords}

%
\IEEEpeerreviewmaketitle

\section{Introduction}\label{sec1}
%
%
%
%
\IEEEPARstart{M}{a}ssive multiple-input multiple-output (MIMO) is an integral part of the fifth generation (5G) new radio (NR) mobile communication systems due to its outstanding performance in terms of spectral efficiency (SE) and energy efficiency \cite{2010Marzetta,2013NgoEnergy}. One of the key feature of massive MIMO is the large number of antennas at the base station (BS). A sufficient number of BS antennas enables linear precoders at the BS, such as zero-forcing (ZF), to achieve near-optimal SE, as well as error-free channel estimation and perfect noise cancellation  \cite{2013HoydisHowmany}. In order to achieve high spatial multiplexing gains, the downlink precoders are calculated based on the prior-estimated downlink channel state information (CSI). 

The CSI acquisition dilemma in massive MIMO systems with non-negligible mobility has recently attracted considerable attention in both academia \cite{2015ZhouMobility,2016MoblitySurvey} and industry \cite{2019mobilityReport,2022mobilityMeeting}. 
Usually, channel prediction is an effective way to help the BS to timely update CSI in the fast time-varying massive MIMO systems. In this scenario, additional methods should be introduced to mitigate the influence of high mobility on CSI acquisition \cite{2016MoblitySurvey}. Recently, many research works have proposed effective channel prediction methods. By exploiting the specific angle-delay-Doppler structure of the multipath, a Prony-based angular-delay domain (PAD) channel prediction was proposed in \cite{2020yinMobility}. For an orthogonal time frequency system (OTFS) system, the channel prediction was achieved by a path division multiple access (PDMA) in \cite{2021OTFS}. The authors of \cite{2022JSACParallel} proposed a channel prediction method based on the attention mechanism in machine learning to avoid propagation error. These works leveraged the channel prediction to update the CSI through a polynomial framework \cite{2020yinMobility} or non-polynomial algorithm \cite{2022JSACParallel}. 

In addition to the dilemma of timely CSI acquisition, the overwhelming precoding complexity is one of the major challenges for the practical massive MIMO systems. In current mobile communication systems, the UEs are equipped with two or more antennas \cite{3gpp214} and eigen zero-forcing (EZF) \cite{2010TSPeigenvalue} is the \textit{de-facto} method for precoding \cite{20175Gtrial,2021trial}. The EZF precoding scheme inevitability introduces an eigenvector decomposition (EVD) or singular value decomposition (SVD) of the CSI to select a particular eigenvector as the precoder for the UEs. The large number of BS antennas results in a high dimensional SVD operation. There are some existing works addressing the precoding complexity reduction in massive MIMO. The work \cite{2022DoAEigen} proposed two low-complexity eigenvector-based estimators utilizing partitioned sub-array auto-correlation and cross-correlation combining. The authors in \cite{2020gramesti} exploited cross-carrier correlation to interpolate the Gram matrix for precoding and reduce the complexity. And the authors in \cite{2021flag} reduced the precoding complexity by interpolating the CSI with a flag-manifold-based method.

In practical 5G systems with mobility, the timely precoding is more challenging. Due to the limited computational ability, the BS often updates the precoders in a periodic way \cite{3gpp214} and it is hard to update the precoders in every subframe even though the CSI may be predicted  \cite{2020yinMobility,2021OTFS,2022JSACParallel}. Therefore, the major challenge lies in the timely and accurate precoding matrix prediction with low-complexity.

To track the fast time-varying channel while reducing the precoding complexity, we utilize several periodic eigenvectors of the CSI to predict the rest of the precoders, called the eigenvector prediction (EGVP) method. The key idea of EVGP is to decompose the precoder into a linear combination of the channel weights and channels. On the one hand, we prove that the channel weights can be estimated with a complex exponential model, which enables a polynomial-complexity channel weight interpolation by the periodic eigenvector samples. On the other hand, the downlink channel prediction is based on an uplink channel parameter prediction with several uplink channel estimation samples and the channel reciprocity in the time division duplex (TDD) system. As a result, the precoding matrix is predicted with polynomial complexity and the CSI delay can be handled at the same time.

The proposed EGVP-based prediction method is different from the traditional channel prediction schemes \cite{2020yinMobility,2021OTFS,2022JSACParallel} and  precoders complexity reduction schemes \cite{2022DoAEigen,2020gramesti,2021flag}. The traditional channel prediction methods like \cite{2020yinMobility,2021OTFS,2022JSACParallel} often assumed a full-time precoder updating scheme, which is difficult to implement in the practical system with limited computational capability.
The studies in \cite{2022DoAEigen,2020gramesti,2021flag} reduced the precoding complexity in one way or another, while failing to consider the fast-varying channel in mobile environments. To the best of our knowledge, this is the first paper to predict the precoders through a low-complexity eigenvector prediction method in a massive MIMO system with mobility. The main contributions of our paper are summarized below:


\begin{itemize}
    \item We predict the precoding matrix in a massive MIMO system by a low-complexity channel eigenvector prediction, called EGVP method. A fast matrix pencil prediction-based EGVP (EGVP-FMPP) method is further proposed to deal with the CSI delay. The two schemes both show great advantages in terms of the precoding complexity and SE, especially in practical systems with limited SVD capability at the BS.
    \item We decompose the precoding matrix prediction problem into two sub-problems: the channel weight interpolation problem and the channel prediction problem. We further prove that the channel weight can be estimated by a complex exponential model associated with the Doppler characteristics of the channels. The relationship between the periodic length of the SVD of the channel matrix and the maximum speed of the UEs is derived to achieve a non-overlapping channel eigenvector sampling in EGVP.
    \item We give asymptotic analyses of the prediction error (PE) performances of the proposed schemes. The results show that two uplink channel samples and twice the largest number of non-orthogonal paths among all downlink channels may lead to an asymptotically error-free channel prediction and eigenvector interpolation, respectively. Moreover, the numerical results show that EGVP method still functions well in practical systems with limited number of antennas and bandwidth.
    \item We demonstrate the performance advantages of the EGVP method over the benchmark schemes, such as the SE, PE and complexity of precoding. For a common system parameter configuration, the EGVP scheme can reduce the complexity of the traditional full-time SVD-based precoding scheme by 80\% without lossing SE performance. According to the numerical results under an industrial channel model, the SE improvements of the EGVP-FMPP scheme over the benchmarks vary from 20.6\% to 49.1\%.
\end{itemize}

The rest of the paper is divided into six sections. \secref{sec2} formulates the downlink channel and the corresponding eigenvector. \secref{sec3} proposes EGVP algorithm given perfect uplink CSI and \secref{sec4} introduces EGVP-FMPP method to combat the CSI delay. \secref{sec5} analyzes the asymptotic performance and the complexity of our schemes. \secref{sec6} shows the numerical results of our proposed methods in different scenarios. \secref{sec7} concludes this paper.

The bold symbol stands for a matrix or a vector. The matrix dimension is denoted by ${\mathbb{C}^{a \times b}}$, where $a$ is the number of rows and $b$ is the number of columns. The Kronecker product symbol is $ \otimes$. The Moore-Penrose inversion, conjugate transpose, conjugation, transpose of the matrix are denoted by the superscripts of $\dag $, $H$, $*$, and $T$, respectively. The absolute value of a complex number is $\left| x \right|$. The expectation of a random variable $x$ over the time and the frequency is $\mathbb{E}\left\{ {{x}} \right\}$. The inner product of two vectors is $\left\langle {{\bf{x}},{\bf{y}}} \right\rangle$. The convergence to infinity of a variable is denoted by $x \to \infty$. The modulo operation is abbreviated as $\bmod \left( {x,y} \right)$. 
\section{Formulation of CSI and eigenvector}\label{sec2}
We consider a TDD massive MIMO system with $N_t$ antennas at the BS. The downlink channel can be estimated through uplink sounding reference signals (SRS) and the exploitation of channel reciprocity. The $K$ UEs are randomly distributed in the cell and each UE is equipped with $M$ antennas. The number of UE antennas is usually limited in practical communication systems, $M\le4$ . The $k$-th UE moves at a certain speed $v_k$. The wideband system has $N_f$ subcarriers and the center frequency is $f_0$. The duration of a subframe is $\Delta _t$ and  
$\Delta_f$ is the subcarrier spacing. 
\subsection{Channel model}
The channel sparsity in spatial and frequency domain can be exploited by projecting the wideband channel into the angular-delay domain \cite{2020yinMobility,3gpp214}. As a result, we consider the wideband channel in this paper. However, the generalization of our methods to the narrow-band case is straightforward. 

Let the collection of the wideband downlink channel at all subcarriers of the $k$-th UE be 
\begin{equation}\label{full dimension channel}
{{\bf{H}}_k}\left( t \right) = \left[ {\begin{array}{*{20}{c}}
{{{\bf{h}}_{1,k}}\left( t \right)}&{{{\bf{h}}_{2,k}}\left( t \right)}& \cdots &{{{\bf{h}}_{M,k}}\left( t \right)}
\end{array}} \right].
\end{equation}
In the following, we drop the subscript $k$ for simplicity. Each column ${\bf{h}}_m \in \mathbb{C}^{N_fN_t\times1}$ represents the downlink channel of the $m$-th antenna at all subcarriers. Assume a uniform planar array (UPA) at the BS \cite{3gpp901}, the channel ${\bf{h}}_m$ is modeled as
\begin{equation}\label{wideband channel}
    {{\bf{h}}_m}\left( t \right) = \sum\limits_{p = 1}^P {{\beta _{p,m}}{e^{j {\omega _{p,m}}t}}{{\bf{d}}_m}\left( {{\theta _{p,m}},{\varphi _{p,m}},{\tau _{p,m}}} \right)},
\end{equation}
where ${{\bf{d}}_m}\left( {{\theta _{p,m}},{\varphi _{p,m}},{\tau _{p,m}}} \right) = {{\boldsymbol{\alpha }}_m}\left( {{\theta _{p,m}},{\varphi _{p,m}}} \right) \otimes {{\boldsymbol{\tau }}_m}\left( {{\tau _{p,m}}} \right)$ is an angle-delay signature of the $p$-th path. The large scale channel gain is $\beta_{p,m}$. The 2D steering vector ${{\boldsymbol{\alpha }}_m}\left( {{\theta _{p,m}},{\varphi _{p,m}}} \right)$ is a Kronecker product of the horizontal steering and vertical steering vector \cite{2013FD}. ${\theta _{p,m}}$, ${\varphi _{p,m}}$ are the zenith angle and azimuth angle, respectively. The total number of paths is $P$. The Doppler frequency is ${\omega _{p,m}} = 2\pi {{v\cos {\phi _{p,m}}{f_0}} \mathord{\left/
 {\vphantom {{v\cos {\phi _{p,m}}{f_0}} c}} \right.
 \kern-\nulldelimiterspace} c}$, where $c$ is the speed of light and ${\phi _{p,m}}$ is the angle between the path $p$ and the speed vector of the UE $k$. The delay vector ${{\boldsymbol{\tau }}_m}\left( {{\tau _{p,m}}} \right)$ characterizes the delay information ${{\tau _{p,m}}}$ at all subcarriers
\begin{equation}
    {{\boldsymbol{\tau }}_m}\left( {{\tau _{p,m}}} \right) = {e^{2\pi j{\tau _{p,m}}{f_0}}}\left[ {\begin{array}{*{20}{c}}
1& \cdots &{{e^{j2\pi {\tau _{p,m}}\left( {{N_f} - 1} \right){\Delta_f}}}}
\end{array}} \right].
\nonumber
\end{equation}

According to the virtual channel representation idea \cite{dft2002}, the channel \eqref{wideband channel} can be projected into the angle-delay domain.
Define a transform matrix as the Kronecker product of a discrete Fourier transform (DFT) matrix ${{\mathbf{W}}_1}\left( {{N_f}} \right) \in {{\mathbb{C}}^{{N_t} \times {N_t}}}$ and a DFT matrix ${{\mathbf{W}}_2}\left( {{N_f}} \right) \in {{\mathbb{C}}^{{N_f} \times {N_f}}}$
\begin{equation}\label{projection matrix}
    {\bf{W}}\left( {{N_f},{N_t}} \right){\rm{ = }}{{\bf{W}}_1}{\left( {{N_f}} \right)^H} \otimes {{\bf{W}}_2}\left( {{N_t}} \right).
\end{equation}
Then the downlink channel in angle-delay domain ${{\bf{g}}_m}\left( t \right)$ is 
\begin{equation}\label{projected wideband channel}
    {{\bf{g}}_m}\left( t \right) = {\bf{W}}{\left( {{N_t},{N_f}} \right)^H}{{\bf{h}}_m}\left( t \right).
\end{equation}

As the matrix ${\bf{W}}{\left( {{N_t},{N_f}} \right)}$ is an unitary matrix, the wideband channel \eqref{wideband channel} can be equivalently calculated by 
\begin{equation}\label{wideband band in angle-delay path form}
    {{\bf{h}}_m}\left( t \right) = {\bf{W}}\left( {{N_t},{N_f}} \right){{\bf{g}}_m}\left( t \right) = \sum\limits_{n = 1}^{{N_t}{N_f}} {{\eta _{m,n}}\left( t \right){{\bf{w}}_n}},  
\end{equation}
where ${{\eta _{m,n}}\left( t \right)}$ is the $n$-th element of the vector ${{\bf{g}}_m}\left( t \right)$. It represents the amplitude of the corresponding angle-delay vector ${\bf{w}}_n$, which is the $n$-th column vector of ${\bf{W}}{\left( {{N_t},{N_f}} \right)}$. 
\subsection{Periodic eigenvector acquisition}
In our paper, the EZF precoding scheme is enabled by the SVD of the wideband channel matrix. To be more specific, the downlink precoder is selected from the eigenvectors of ${\bf{H}}\left( t \right)$. Due to the limited SVD computation capability of most practical communication systems, the BS often updates the precoders periodically with a cycle length $T_{\rm{svd}}$, which is longer than the duration of a subframe. 

We now discuss the eigenvector acquisition for precoding at the BS. First, the SVD of ${\bf{H}}\left( t \right)$ is given by
\begin{equation}\label{SVD equation}
    {{\bf{H}}}\left( t \right) = {\bf{U}}\left( t \right){\bf{\Sigma }}\left( t \right){\bf{V}}{\left( t \right)^H} = {{\bf{U}}_M}\left( t \right){{\bf{\Sigma }}_M}\left( t \right){{\bf{V}}_M}{\left( t \right)^H},
\end{equation}
where ${\bf{U}}\left( t \right)\in \mathbb{C}^{N_fN_t \times N_fN_t}$ is the left-singular matrix and ${\bf{V}}\left( t \right)\in \mathbb{C}^{M \times M}$ is the right-singular matrix. ${\bf{\Sigma }}\left( t \right)\in \mathbb{C}^{N_fN_t \times M}$ is the singular value matrix of which elements are the singular values sorted in a descending order. The subscript $M$ means $M$-truncated in columns. And the $M$-truncated eigenmatrix ${{\bf{U}}_M}\left( t \right)\in \mathbb{C}^{N_fN_t \times M}$ is defined by
\begin{equation}\label{eigen vector difinition}
   {{\bf{U}}_M}\left( t \right) = \left[ {\begin{array}{*{20}{c}}
{{{\bf{u}}_1}\left( t \right)}&{{{\bf{u}}_2}\left( t \right)}& \cdots &{{{\bf{u}}_M}\left( t \right)}
\end{array}} \right].
\end{equation}

Note that the eigenmatrix ${{\bf{U}}_M}\left( t \right)$ suffers from a random phase ambiguity due to the SVD nature. Therefore, a phase-calibration progress of the eigenvectors \eqref{eigen vector difinition} should be introduced.  
During a period of time ${t} \in \left[ {{t_{{\rm{in}}}},{t_{{\rm{ed}}}}} \right]$, the eigenvectors ${{\bf{u}}_m}\left( t \right)$ are phase-calibrated by
   \begin{equation}\label{phase calibration of SVD}
        \left[ {\begin{array}{*{20}{c}}
{{{\overline {\bf{u}} }_m}\left( {{t_{{\rm{in}}}}} \right)}\\
{{{\overline {\bf{u}} }_m}\left( {{t_{{\rm{in}}}} + {\Delta _t}} \right)}\\
 \cdots \\
{{{\overline {\bf{u}} }_m}\left( {{t_{{\rm{ed}}}}} \right)}
\end{array}} \right] = \left[ {\begin{array}{*{20}{c}}
{{{\bf{u}}_m}\left( {{t_{{\rm{in}}}}} \right)}\\
{{\Delta _m}\left( {{t_{{\rm{in}}}} + {\Delta _t}} \right){{\bf{u}}_m}\left( {{t_{{\rm{in}}}} + {\Delta _t}} \right)}\\
 \cdots \\
{{\Delta _m}\left( {{t_{{\rm{ed}}}}} \right){{\bf{u}}_m}\left( {{t_{{\rm{ed}}}}} \right)}
\end{array}} \right],
        \nonumber
    \end{equation} 
where the phase calibration is calculated by ${\Delta _m}\left( {{t}} \right) = \frac{{{{\bf{u}}_m}{{\left( {{t}} \right)}^H}{{\bf{u}}_m}\left( {{t_{{\rm{in}}}}} \right)}}{{\left| {{{\bf{u}}_m}{{\left( {{t}} \right)}^H}{{\bf{u}}_m}\left( {{t_{{\rm{in}}}}} \right)} \right|}}$. The initial subframe and ending subframe are $t_{\rm{in}}$ and $t_{\rm{ed}}$, respectively. The current subframe is $t$. 

Obviously, the phase-calibrated eigenvector $\overline {\bf{u}}_m \left( t \right)$ is also the eigenvector of ${{\bf {H}}}\left( t \right)$. Learned from basic algebra knowledge, the phase-calibration does not affect the SE.

\begin{figure}[ht]
\centering
\includegraphics[width=3.2in]{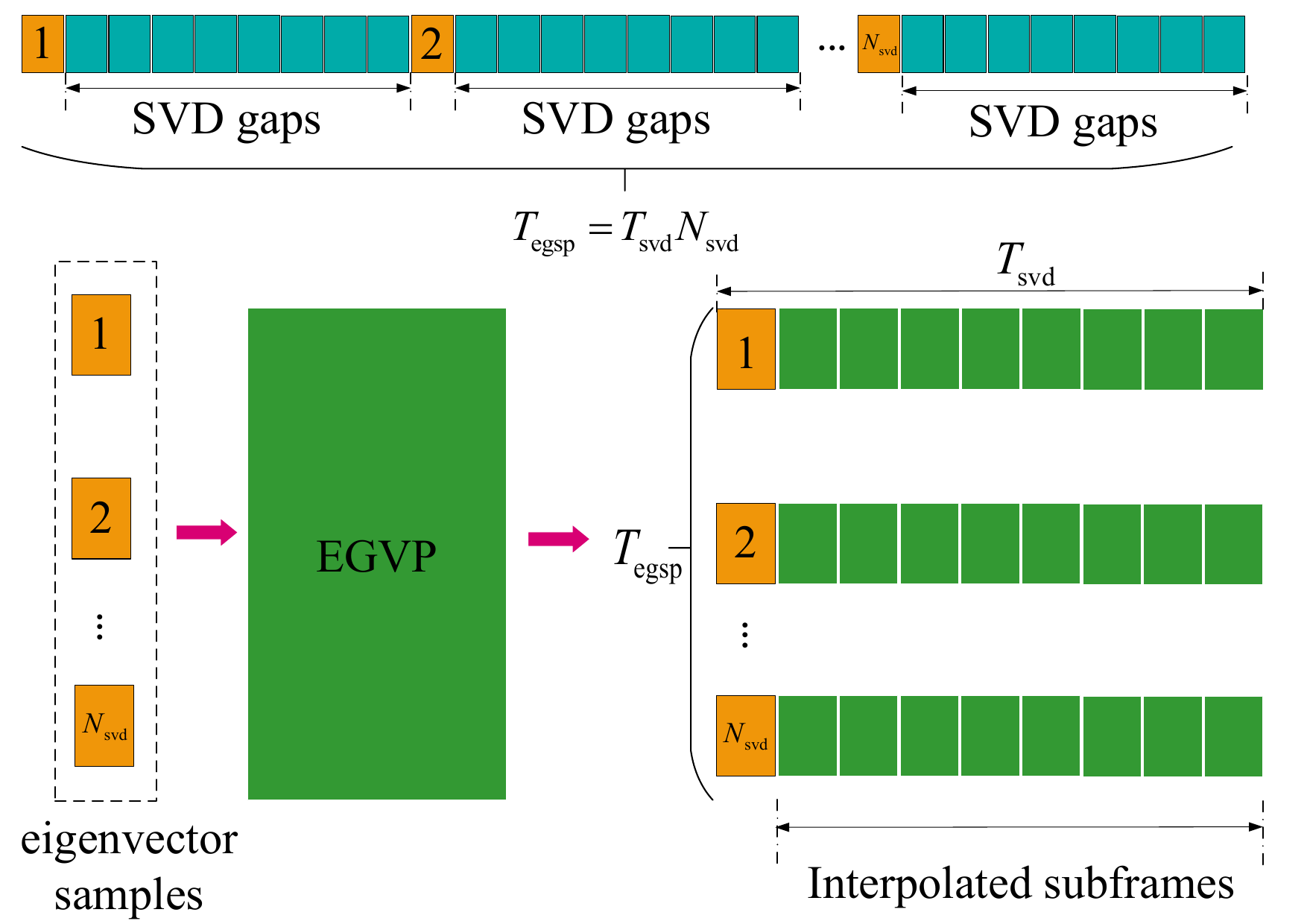}
\caption{Illustration of the precoding matrix prediction in the EGVP scheme within a period of $T_{\rm{egsp}}$ subframes.}
\vspace{-0.3cm}
\label{fig_framesturcture}
\end{figure}
As the result of the periodic SVD at the BS, eigenvectors are sampled every $T_{\rm{svd}}$ subframes and the rest of them need to be interpolated. In this case, our EGVP scheme is devised to interpolate the precoders when the SVD operation is not available. We refers these subframes as SVD gaps in the rest of the paper. The basic idea of EGVP lies in utilizing a small amount $N_{\rm{svd}}$ of eigenvector samples to interpolate the rest of eigenvectors within a period of time $T_{\rm{egsp}}=N_sT_{\rm{svd}}$ which is referred to as the EGVP interval in the following. The illustration of our EGVP within one $T_{\rm{egsp}}$ is shown in Fig. \ref{fig_framesturcture} and the details will be discussed in the next section.

\section{EGVP with perfect CSI}\label{sec3}
In this section, the detailed algorithm of the EGVP method is derived considering the ideal case where the downlink CSI is known. The realistic case where the CSI suffers from a non-negligible delay will be discussed in the next section. Given a perfect CSI, the precoder prediction problem is transformed into a channel weight interpolation problem. First, the calibrated eigenmatrix  $\overline {\bf{U}}_M \left( t \right)$ is written as a linear combination of the channels and channel weights. It is then shown that the channel weight can be approximated by a exponential model, which enables a low-complexity channel weight interpolation. In the end, the precoders are predicted by the interpolated channel weights and the CSI. 
\subsection{Linear eigenvector decomposition}
The following proposition is first derived. 
\begin{proposition}\label{Proposition eigen decomposition}
   The eigenvectors $\overline {\bf{u}}_m \left( t \right)$ can be decomposed into a linear combination of the channels ${\bf{h}}_j\left( t \right)$ and channel weights $a_{m,j}\left(t\right)$
     \begin{equation}\label{eigenvector decomposition}
         \overline {\bf{u}}_m \left( t \right) = \sum\limits_{j = 1}^M {{a_{m,j}}\left( t \right){{\bf{h}}_j}} \left( t \right).
      \end{equation}
\end{proposition}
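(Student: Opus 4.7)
The plan is to recognize that, for $m=1,\ldots,M$, the left singular vector $\mathbf{u}_m(t)$ corresponding to a nonzero singular value lies in the column space of $\mathbf{H}(t)$, and the columns of $\mathbf{H}(t)$ are precisely the per-antenna channels $\mathbf{h}_1(t),\ldots,\mathbf{h}_M(t)$. Since the phase calibration only multiplies $\mathbf{u}_m(t)$ by the scalar $\Delta_m(t)$, the calibrated vector $\overline{\mathbf{u}}_m(t)$ lies in the same column space, and therefore admits a representation as a linear combination of the $\mathbf{h}_j(t)$.

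Concretely, I would start from the truncated SVD $\mathbf{H}(t) = \mathbf{U}_M(t)\boldsymbol{\Sigma}_M(t)\mathbf{V}_M(t)^H$ stated in \eqref{SVD equation}. Under the mild assumption that $\mathbf{H}(t)$ has full column rank (which holds in the massive MIMO regime $N_fN_t \gg M$ for generic angle-delay signatures), $\boldsymbol{\Sigma}_M(t)$ is invertible and $\mathbf{V}_M(t)$ is unitary, so right-multiplying by $\mathbf{V}_M(t)\boldsymbol{\Sigma}_M(t)^{-1}$ yields
\begin{equation}
\mathbf{U}_M(t) = \mathbf{H}(t)\,\mathbf{V}_M(t)\boldsymbol{\Sigma}_M(t)^{-1}.
\nonumber
\end{equation}
Denoting the $(j,m)$ entry of $\mathbf{V}_M(t)\boldsymbol{\Sigma}_M(t)^{-1}$ by $c_{m,j}(t)$, the $m$-th column reads $\mathbf{u}_m(t) = \sum_{j=1}^M c_{m,j}(t)\mathbf{h}_j(t)$. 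Finally, applying the phase calibration $\overline{\mathbf{u}}_m(t) = \Delta_m(t)\mathbf{u}_m(t)$ and absorbing the scalar into the coefficients through $a_{m,j}(t) \triangleq \Delta_m(t) c_{m,j}(t)$ delivers the claimed decomposition \eqref{eigenvector decomposition}.

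The only delicate point is the full-column-rank hypothesis needed to invert $\boldsymbol{\Sigma}_M(t)$; if some singular values vanish, the corresponding $\mathbf{u}_m(t)$ may live outside the column span of $\mathbf{H}(t)$. I would address this by noting that, since $M\le 4 \ll N_fN_t$ and the per-antenna channels in \eqref{wideband channel} are superpositions of distinct angle-delay signatures, $\mathbf{H}(t)$ is almost surely of rank $M$, so this degeneracy can be safely excluded. No further calculation is required beyond identifying the coefficient matrix $\mathbf{V}_M(t)\boldsymbol{\Sigma}_M(t)^{-1}\mathrm{diag}(\Delta_1(t),\ldots,\Delta_M(t))^T$, which gives the $a_{m,j}(t)$ in closed form and completes the proof.
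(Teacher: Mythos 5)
Your proof is correct, but it takes a genuinely different route from the paper's. You work directly from the truncated SVD factorization, right-multiplying $\mathbf{H}(t)=\mathbf{U}_M(t)\boldsymbol{\Sigma}_M(t)\mathbf{V}_M(t)^H$ by $\mathbf{V}_M(t)\boldsymbol{\Sigma}_M(t)^{-1}$ to exhibit $\mathbf{U}_M(t)$ as $\mathbf{H}(t)$ times a coefficient matrix; the paper instead passes to the eigenvalue equation of the Gram matrix, $\bigl(\sum_{j=1}^M \mathbf{h}_j(t)\mathbf{h}_j(t)^H\bigr)\overline{\mathbf{u}}_m(t)=\lambda_m(t)\overline{\mathbf{u}}_m(t)$, expands the left-hand side, and reads off the weight as $a_{m,j}(t)=\langle \overline{\mathbf{u}}_m(t),\mathbf{h}_j(t)\rangle/\chi_m(t)$. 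The two are consistent: since $\mathbf{H}(t)^H\mathbf{u}_m(t)=\sigma_m(t)\mathbf{v}_m(t)$, your coefficient $[\mathbf{V}_M\boldsymbol{\Sigma}_M^{-1}]_{j,m}$ equals the paper's inner-product formula with $\chi_m=\sigma_m^2$. Your argument is arguably cleaner and makes the column-space intuition explicit, and you are right to flag the rank-$M$ hypothesis, which the paper leaves implicit (its definition divides by $\chi_m(t)$, so it tacitly assumes the same nondegeneracy). What the paper's route buys is the specific closed form $a_{m,j}(t)=\langle\overline{\mathbf{u}}_m(t),\mathbf{h}_j(t)\rangle/\chi_m(t)$, which it reuses downstream (in the physical interpretation of the weights and in the proof of Theorem~\ref{theorem weight prediction model}, where the weights are identified as eigenvectors of the inner-product matrix $\mathbf{S}(t)$); if you adopted your version you would want to note this equivalence so the later arguments still connect.
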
  
\begin{proof}
\quad \emph{Proof:} Please refer to Appendix \ref{Appendix Proposition eigen decompose}.
\end{proof}
Based on \eqref{weight defination} in Appendix \ref{Appendix Proposition eigen decompose}, the channel weight $a_{m,j}\left( t \right)$ actually characterizes the ratio of the inner product of ${\bf{h}}_j\left(t\right)$ and $\overline {\bf{u}}_m \left( t \right)$ to the $m$-th eigenvalue $\chi_m\left(t\right)$ 
\begin{equation}\label{weight physical meaning}
    {a_{m,j}}\left( t \right) = \frac{{\left\langle {\overline {\bf{u}}_m \left( t \right),{{\bf{h}}_j}\left( t \right)} \right\rangle }}{{{\chi _m}\left( t \right)}}.
\end{equation}
Utilizing Proposition \ref{Proposition eigen decomposition}, the eigenmatrix is equivalently written  
\begin{equation}\label{eigen decompostion calculation}
    \overline {\bf{U}}_M \left( t \right) = \left[ {\begin{array}{*{20}{c}}
{{{\bf{h}}_1}\left( {t} \right)}&{{{\bf{h}}_2}\left( {t} \right)}& \cdots &{{{\bf{h}}_M}\left( {t} \right)}
\end{array}} \right]{\bf{A}}\left( t \right),
\end{equation}
where the channel weight matrix ${\bf{A}}\left( t \right)$ is 
\begin{equation}\label{weight matrix definition}
    {\bf{A}}\left( t \right) = \left[ {\begin{array}{*{20}{c}}
{{a_{1,1}}\left( t \right)}&{{a_{2,1}}\left( t \right)}& \cdots &{{a_{M,1}}\left( t \right)}\\
{{a_{1,2}}\left( t \right)}&{{a_{2,2}}\left( t \right)}& \cdots &{{a_{M,2}}\left( t \right)}\\
 \cdots & \cdots & \cdots & \cdots \\
{{a_{1,M}}\left( t \right)}&{{a_{2,M}}\left( t \right)}& \cdots &{{a_{M,M}}\left( t \right)}
\end{array}} \right].
\end{equation}

Therefore, the eigenvector interpolation problem is equivalent to the channel weight interpolation problem.
\subsection{Channel weight estimation model}
The interpolation of the channel weights is difficult to obtain due to the non-linearity and non-singularity nature of SVD. However, it might be asymptotically approximated by a complex exponential model. 
Unless otherwise specified, the asymptotic condition refers to the case where the bandwidth $N_f$ and the number of base station antennas $N_t$ approaches infinity. We will propose a theorem to deal with the channel weight estimation problem. To begin with, we introduce the following lemma to prove that the inner product of two downlink channels ${s_{i,j}}\left( t \right) = \left\langle {{{\bf{h}}_j}\left( t \right),{{\bf{h}}_i}\left( t \right)} \right\rangle$ can be approximated by a complex exponential model.

\begin{lemma}\label{Lemma of relevance model}
    The channel inner product ${s_{i,j}}\left( t \right)$ can be asymptotically approximated by a complex exponential model
    \begin{equation}\label{channel inner product DFT}
         \mathop {\lim }\limits_{{N_t}{N_f} \to \infty } {{s_{i,j}}\left( t \right) = \sum\limits_{x \in {{\cal{X}}_{i,j}}} {\rho _{i,j}^{\left( x \right)}{e^{j\omega _{i,j}^{\left( x \right)}t}}} },
    \end{equation}
    where ${\rho _{i,j}^{\left( x \right)}}$ is the complex amplitude and ${{e^{j\omega _{i,j}^{\left( x \right)}t}}}$ is the corresponding exponential. The set ${\cal{X}}_{i,j}$ contains the indices of non-zero angle-delay vectors and the size of it is ${N_{{\rm{no}},i,j}}$.
\end{lemma}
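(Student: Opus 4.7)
The plan is to exploit the virtual channel representation in \eqref{wideband band in angle-delay path form} and the unitarity of $\mathbf{W}(N_t,N_f)$. First I would rewrite the inner product in the angle-delay domain using \eqref{projected wideband channel},
\begin{equation}
s_{i,j}(t) = \langle \mathbf{h}_j(t), \mathbf{h}_i(t) \rangle = \langle \mathbf{g}_j(t), \mathbf{g}_i(t)\rangle = \sum_{n=1}^{N_t N_f} \eta_{i,n}(t)^{*} \eta_{j,n}(t),
\end{equation}
so that the time dependence of $s_{i,j}(t)$ is determined entirely by the scalar coefficients $\eta_{m,n}(t) = \mathbf{w}_n^H \mathbf{h}_m(t)$.

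Next I would substitute the path expansion \eqref{wideband channel} into each $\eta_{m,n}(t)$ to obtain
\begin{equation}
\eta_{m,n}(t) = \sum_{p=1}^{P} \beta_{p,m}\, e^{j\omega_{p,m} t}\, \mathbf{w}_n^H \mathbf{d}_m(\theta_{p,m},\varphi_{p,m},\tau_{p,m}),
\end{equation}
and then argue asymptotic sparsity: as $N_t N_f \to \infty$, the factor $\mathbf{w}_n^H \mathbf{d}_m(\cdot)$ is a Dirichlet-type kernel in angle and delay that becomes sharply concentrated at the single bin closest to the true continuous parameters of path $p$. Hence for each $(m,n)$ there exists an asymptotic amplitude $\tilde{\beta}_{p,m,n}$ such that
\begin{equation}
\eta_{m,n}(t) \;\longrightarrow\; \sum_{p \in \mathcal{P}_{m,n}} \tilde{\beta}_{p,m,n}\, e^{j\omega_{p,m} t},
\end{equation}
where $\mathcal{P}_{m,n}$ collects the paths of antenna $m$ whose angle-delay signature concentrates on the $n$-th bin, and $\eta_{m,n}(t) \to 0$ for all other bins.

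The third step is to multiply the asymptotic expressions for $\eta_{i,n}$ and $\eta_{j,n}$ and collect terms. This yields
\begin{equation}
s_{i,j}(t) \;\longrightarrow\; \sum_{n \in \mathcal{X}_{i,j}} \sum_{p \in \mathcal{P}_{i,n}} \sum_{q \in \mathcal{P}_{j,n}} \tilde{\beta}_{p,i,n}^{*}\, \tilde{\beta}_{q,j,n}\, e^{j(\omega_{q,j} - \omega_{p,i}) t},
\end{equation}
where $\mathcal{X}_{i,j}$ is precisely the set of bins at which both $\mathbf{g}_i(t)$ and $\mathbf{g}_j(t)$ are non-zero, of size $N_{\mathrm{no},i,j}$. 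Re-indexing each triple $(n,p,q)$ by a single index $x$, grouping contributions that share the same Doppler-difference frequency, and absorbing the complex amplitudes into $\rho_{i,j}^{(x)}$ with $\omega_{i,j}^{(x)} = \omega_{q,j} - \omega_{p,i}$ gives the claimed complex exponential form \eqref{channel inner product DFT}.

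The hard part will be rigorously justifying the Dirichlet-kernel-to-delta passage, because the angles $\theta_{p,m},\varphi_{p,m}$ and delays $\tau_{p,m}$ live on a continuum while $\mathbf{W}(N_t,N_f)$ is a discrete DFT; off-grid parameters produce spectral leakage across multiple bins. I would handle this by appealing to the standard asymptotic identity $\lim_{N\to\infty} \tfrac{1}{N}\bigl|\sum_{k=0}^{N-1}e^{j k \theta}\bigr|^2 = \delta(\theta)$, which forces leakage power into a vanishing fraction of bins while preserving the path's Doppler $\omega_{p,m}$ in the time dependence. Consequently the time-frequency support of the limit is still an exponential sum with frequencies drawn from pairwise Doppler differences, which is all that the lemma asserts.
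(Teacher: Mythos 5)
Your proposal follows essentially the same route as the paper's proof: exploit the unitarity of $\mathbf{W}(N_t,N_f)$ to write $s_{i,j}(t)$ as a per-bin sum of products $\eta_{j,n}(t)^{*}\eta_{i,n}(t)$, argue that asymptotically each DFT column aligns with a path's angle-delay signature so that each $\eta_{m,n}(t)$ becomes a complex exponential at the path Doppler, and conclude that the product is an exponential sum at Doppler-difference frequencies supported on the commonly non-zero bins $\mathcal{X}_{i,j}$. The only divergence is that you retain a possible multiplicity of paths per bin (the triple sum over $(n,p,q)$), whereas the paper asserts $Q_{m,n}=1$ asymptotically so each bin contributes a single exponential; under that assumption your expression collapses exactly to the paper's, and both arguments leave the off-grid DFT-leakage step at the same level of informality.
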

\begin{proof}
\quad \emph{Proof:} Please refer to Appendix \ref{Appendix for lemma of relevance model}.
\end{proof}
According to Appendix \ref{Appendix for lemma of relevance model}, the physical meaning of $N_{{\rm{no}},i,j}$ is the number of non-orthogonal multi-paths between the channel ${\bf{h}}_i\left( t \right)$ and ${\bf{h}}_j\left( t \right)$. The exponential ${{e^{j\omega _{i,j}^{\left( x \right)}t}}}$ reflects the difference of Doppler frequency shift between the non-orthogonal path $x$ of the channel ${\bf{h}}_i\left( t \right)$ and ${\bf{h}}_j\left( t \right)$. 
We introduce a technical assumption below on the asymptotic value of ${s_{i,j}}\left( t \right)$. 

\begin{assumption}\label{assumption for asymptotic analysis}
The asymptotic value of ${s_{i,j}}\left( t \right)$ satisfies
\begin{small}
 \begin{equation}\label{asympototic results channel one}
     \mathop {\lim }\limits_{{N_t}{N_f} \to \infty } \frac{{{s_{i,i}}\left( t \right)}}{{{N_t}{N_f}}} = \mathop {\lim }\limits_{{N_t}{N_f} \to \infty } \frac{{\left\| {{{\bf{h}}_i}\left( t \right)} \right\|_2^2}}{{{N_t}{N_f}}} = {\cal{A}}_i,
 \end{equation}
\begin{equation}\label{asympototic results channel two}
    \mathop {\lim }\limits_{{N_t}{N_f} \to \infty } \frac{{{s_{j,j}}\left( t \right)}}{{{N_t}{N_f}}} = \mathop {\lim }\limits_{{N_t}{N_f} \to \infty } \frac{{\left\| {{{\bf{h}}_j}\left( t \right)} \right\|_2^2}}{{{N_t}{N_f}}} = {\cal{B}}_j,
\end{equation}
\begin{equation}\label{asympototic results channel relevance}
    \mathop {\lim }\limits_{{N_t}{N_f} \to \infty } \frac{{ \left|{{s_{i,j}}\left( t \right)} \right|}}{{{N_t}{N_f}}} = \mathop {\lim }\limits_{{N_t}{N_f} \to \infty } \frac{{\left| {\left\langle {{{\bf{h}}_i}\left( t \right),{{\bf{h}}_j}\left( t \right)} \right\rangle } \right|}}{{{N_t}{N_f}}} = {\cal{C}}_{i,j} \ne 0.
\end{equation}
\end{small}
\end{assumption}

Based on the law of large numbers, the equations \eqref{asympototic results channel one} and \eqref{asympototic results channel two} can be reasonably assumed. The non-zero constants ${\cal{A}}_i$ and ${\cal{B}}_j$ mean that the normalized channel gain of ${\bf{h}}_i\left( t \right)$ and ${\bf{h}}_j\left( t \right)$ converge to constant values. The assumption \eqref{asympototic results channel relevance} is the result of the non-trivial correlation between the channel ${{\bf{h}}_i}\left( t \right)$ and ${{\bf{h}}_j}\left( t \right)$, which is caused by the multi-antennas at the UE side and the finite number of multi-paths of the channel. This assumption is valid under the channel model of 5G NR \cite{3gpp901}.

In the end, the following theorem is proposed to prove that the channel weight $a_{m,j}\left(t\right)$ can be asymptotically approximated by a complex exponential model.

\begin{theorem}\label{theorem weight prediction model}
     The channel weight $a_{m,j}\left(t\right)$ can be approximated by a linear combination of complex exponential functions under Assumption \ref{assumption for asymptotic analysis}. 
    \begin{equation}\label{weight prediction model}
        \mathop {\lim }\limits_{{N_t}{N_f} \to \infty } {a_{m,j}}\left( t \right){\rm{ }} = \sum\limits_{l = 1}^{{L_{i,j}}} {b_{m,j}^{\left( l \right)}{e^{j\omega _{m,j}^{\left( l \right)}t}}} ,
    \end{equation}
    where ${e^{j\omega _{m,j}^{\left( l \right)}t}}$ is the exponential and ${b_{m,j}^{\left( l \right)}}$ is the corresponding amplitude. The order $L_{i,j}$ is the number of exponentials.
\end{theorem}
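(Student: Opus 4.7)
The plan is to reduce the claim to an eigen-analysis of the small $M\times M$ Gram matrix ${\bf S}(t)={\bf H}(t)^H{\bf H}(t)$ and then apply Lemma~\ref{Lemma of relevance model} entry-wise. Starting from Proposition~\ref{Proposition eigen decomposition} together with $\overline{\bf U}_M(t)^H\overline{\bf U}_M(t) = {\bf I}_M$, I would first show that each column ${\bf a}_m(t)$ of ${\bf A}(t)$ is, up to the calibrating unit-modulus scalar $\phi_m(t)$, a scaled eigenvector of ${\bf S}(t)$:
\begin{equation*}
{\bf a}_m(t) = \phi_m(t)\,{\bf v}_m(t)/\chi_m(t),
\end{equation*}
where ${\bf S}(t){\bf v}_m(t)=\chi_m^2(t){\bf v}_m(t)$. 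This follows by left-multiplying the identity $\overline{\bf U}_M(t)={\bf H}(t){\bf A}(t)$ by ${\bf H}(t)^H$ and invoking the SVD \eqref{SVD equation}. It therefore suffices to prove that, in the asymptotic regime, the entries $v_{j,m}(t)$, the reciprocal $1/\chi_m(t)$, and the phase $\phi_m(t)$ are each finite sums of complex exponentials in $t$.

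Next, I would characterize ${\bf S}(t)$ asymptotically. By Lemma~\ref{Lemma of relevance model} each entry $s_{i,j}(t)$ tends to a finite sum $\sum_x\rho_{i,j}^{(x)}e^{j\omega_{i,j}^{(x)}t}$, with dominant amplitudes of order $N_tN_f$ by Assumption~\ref{assumption for asymptotic analysis}. After scaling by $1/(N_tN_f)$, the limiting matrix $\tilde{\bf S}(t)$ is Hermitian and $M\times M$, with entries that are finite trigonometric polynomials in $t$ and with diagonal entries tending to positive constants (e.g.\ ${\cal A}_i$, ${\cal B}_j$). Since this normalization leaves eigenvectors unchanged and only rescales eigenvalues, the original problem reduces to the eigen-analysis of a small matrix with exponential-sum entries.

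The main obstacle is pushing the sum-of-exponentials structure through the nonlinear maps that produce the eigenpairs and their normalizations. Provided the limiting eigenvalues of $\tilde{\bf S}(t)$ remain separated for all $t$ of interest, the spectral projectors and the corresponding eigenvectors ${\bf v}_m(t)$ are analytic functions of the entries $\tilde s_{i,j}(t)$ via the Riesz--Dunford functional calculus; similarly $\chi_m(t)=\sqrt{\lambda_m(t)}$ and $1/\chi_m(t)$ are analytic on any neighbourhood bounded away from zero. Expanding these as convergent power series in the entries $\tilde s_{i,j}(t)$ and invoking the identity $e^{j\omega_1 t}\cdots e^{j\omega_k t}=e^{j(\omega_1+\cdots+\omega_k)t}$, every term of the expansion is itself an exponential in $t$, so truncation produces a finite sum of exponentials with an asymptotically vanishing error. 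The phase $\phi_m(t)$ inherits the same structure, since it is built from inner products divided by their magnitudes, an operation smooth off zero.

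Combining these steps, $a_{m,j}(t)=\phi_m(t)v_{j,m}(t)/\chi_m(t)$ is a finite product of quantities each asymptotically representable as a finite sum of exponentials, and hence is itself such a sum. The order $L_{i,j}$ in \eqref{weight prediction model} is accounted for by bookkeeping of admissible frequencies: only $\mathbb{Z}$-linear combinations of the $N_{{\rm no},i,j}$ Doppler shifts appearing in the entries of $\tilde{\bf S}(t)$ enter the expansion, yielding a finite effective set of frequencies that determines the number of exponentials required.
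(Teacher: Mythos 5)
Your opening reduction is the same as the paper's: left-multiplying $\overline{\bf U}_M(t)={\bf H}(t){\bf A}(t)$ by ${\bf H}(t)^H$ turns the problem into the eigen-analysis of the $M\times M$ inner-product matrix ${\bf S}(t)$ (the paper's Eq.~\eqref{relevance eigen equation}), and Lemma~\ref{Lemma of relevance model} is invoked entry-wise exactly as in Appendix~\ref{Appendix Theroem weight predicion model}. The gap is in how you push the exponential-sum structure through the spectral maps. The Riesz--Dunford/power-series argument expands ${\bf v}_m(t)$, $1/\chi_m(t)$ and $\phi_m(t)$ as convergent series in the entries $\tilde s_{i,j}(t)$; each monomial is indeed a single exponential, but the series has infinitely many terms, and the admissible frequencies --- $\mathbb{Z}$-linear (or even $\mathbb{N}$-linear) combinations of the $N_{{\rm no},i,j}$ base Doppler differences --- form an \emph{infinite} set. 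So this argument yields an infinite exponential series, not the finite sum of $L_{i,j}$ exponentials the theorem asserts and the Matrix Pencil step (which needs $N_{\rm svd}\ge 2L_{\rm svd}$ samples) requires. Your claim that ``truncation produces a finite sum of exponentials with an asymptotically vanishing error'' conflates two limits: the truncation error is controlled by how many series terms you keep, not by $N_tN_f\to\infty$. Concretely, already for $M=2$ the unit-normalized eigenpair involves $\sqrt{(s_{1,1}(t)-s_{2,2}(t))^2+4|s_{1,2}(t)|^2}$ and $1/\chi_m(t)$, neither of which is a finite exponential sum for generic trigonometric-polynomial entries.

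The paper closes this gap with two moves you do not make. First, Assumption~\ref{assumption for asymptotic analysis} is used not merely to say the diagonal entries stabilize, but to show that the entire normalized denominator $D_{1,2}(t)\to\sqrt{({\cal A}-{\cal B})^2+{\cal C}^2}+{\cal A}-{\cal B}$ is a \emph{time-invariant constant}, so the eigenvector-component ratio ${\cal K}_{1,2}(t)$ is asymptotically a constant multiple of $s_{1,2}(t)$ --- a finite exponential sum --- rather than a ratio of two such sums. Second, the paper exploits the normalization freedom of the eigenvector: instead of your unit-norm choice ${\bf a}_m(t)=\phi_m(t){\bf v}_m(t)/\chi_m(t)$ (which forces division by non-exponential-sum quantities), it fixes the particular solution $\overline a_{2,1}(t)={\cal U}s_{1,2}(t)$, whence $\overline a_{1,1}(t)=\widetilde{\cal U}s_{1,2}(t)^2$ is a finite product of finite exponential sums, hence itself a finite exponential sum; the $M>2$ case is handled by iterating the same ratio argument through Gaussian elimination. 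Your proof would need to replace the functional-calculus expansion with this constant-denominator-plus-free-normalization argument (or some equivalent) to actually obtain a finite order $L_{i,j}$.
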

\begin{proof}
\quad \emph{Proof:} Please refer to Appendix \ref{Appendix Theroem weight predicion model}.
\end{proof}

We should note that the complex exponential model of the channel weight in \eqref{weight prediction model} is not unique. The reason lies in the plain fact of non-unique eigenvector solution of \eqref{solving eigen equation}. Fortunately, Appendix \ref{Appendix Theroem weight predicion model} proves that this does not affect the conclusion of Theorem \ref{theorem weight prediction model}. Moreover, it gives a particular eigenvector solution ${\overline {\bf{a}} _m}\left( t \right)$ related to the channel inner product ${s_{i,j}}\left( t \right)$. In such case, the number of exponentials $L_{i,j}$ is associated with the number of non-orthogonal multi-paths $N_{{\rm{no}},i,j}$. The exponential
${e^{j\omega _{m,j}^{\left( l \right)}t}}$ characterizes the difference of Doppler shifts between the channel ${\bf{h}}_m\left( t \right)$ and ${\bf{h}}_j\left( t \right)$. 
The exact value of $L_{i,j}$ can be computed by a signal detecting technology \cite{chen1991detection}. However, a unified $L_{i,j}=L_{\rm{svd}}$ is also a practical option, which will be demonstrated in Sec. \ref{sec6}. 

Theorem \ref{theorem weight prediction model} is the basis of EGVP method. Then we discuss how to predict the precoders by the eigenvector interpolation.
\subsection{Precoder prediction}\label{eigenvector interpolation}  
In order to interpolate the channel weights within SVD gaps, we utilize $N_{\rm{svd}}$ eigenvector samples to obtain the channel weight estimation model in \eqref{weight prediction model}. The precoders are predicted by the interpolated eigenvectors and the CSI.

Denote the eigenvector samples within one EGVP interval $T_{\rm{egsp}}$ by 
\begin{small}
  \begin{equation}\label{eigen vector sample}
    \overline {\bf{u}}_m \left( {{t_{{\rm{inl}}}}} \right),\overline {\bf{u}}_m \left( {{t_{{\rm{inl}}}} + {T_{{\rm{svd}}}}} \right), \cdots, \overline {\bf{u}}_m \left( {{t_{{\rm{inl}}}} + \left( {{N_{\rm{svd}}} - 1} \right){T_{{\rm{svd}}}}} \right),
    \nonumber
\end{equation}     
\end{small}
where $t_{\rm{inl}}$ is the initial subframe. According to the physical meaning of the channel weight model \eqref{weight prediction model} in Theorem \ref{theorem weight prediction model}, the following corollary is easily derived to ensure the Nyquist sampling frequency.

\begin{corollary}\label{corollary}
    The SVD sample length is upper bounded by
    \begin{equation}\label{sampling constraint}
        {T_{{\rm{svd}}}} \le \frac{c}{{2{v_{\max }}{f_0}}},
    \end{equation}
    where the right hand side denotes the inverse of the maximum Doppler frequency of the channel.
\end{corollary}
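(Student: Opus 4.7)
The plan is to derive the bound directly from the Nyquist–Shannon sampling theorem applied to the complex exponential model of the channel weights established in Theorem~\ref{theorem weight prediction model}. Since the EGVP reconstruction of the precoder is, via \eqref{eigenvector decomposition} and \eqref{eigen decompostion calculation}, an affine functional of $a_{m,j}(t)$, aliasing-free recovery of the precoders from the periodic samples reduces to aliasing-free recovery of the band-limited waveform $a_{m,j}(t)$ from its samples taken every $T_{\rm{svd}}$ subframes.

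First I would pin down the band-limit. From \eqref{wideband channel}, the Doppler of the $p$-th path of the $m$-th transmit antenna is $\omega_{p,m}=2\pi v\cos(\phi_{p,m})f_0/c$, so every Doppler frequency in the system satisfies $|\omega_{p,m}|\le 2\pi v_{\max}f_0/c$ (as $|\cos\phi|\le 1$ and $v\le v_{\max}$). Next I would chase these individual Dopplers through the derivation of Theorem~\ref{theorem weight prediction model} (Appendix~\ref{Appendix Theroem weight predicion model}) and of Lemma~\ref{Lemma of relevance model} to show that every frequency $\omega_{m,j}^{(l)}$ appearing in \eqref{weight prediction model} is of the form of a difference of Doppler shifts of non-orthogonal paths contributing to $s_{i,j}(t)$, and hence its magnitude obeys $|\omega_{m,j}^{(l)}|\le \omega_{\max}:=2\pi v_{\max}f_0/c$.

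Having shown that $a_{m,j}(t)$ is a finite superposition of complex exponentials whose frequencies lie in $[-\omega_{\max},\omega_{\max}]$, the final step is the standard complex-baseband Nyquist statement: to guarantee injective recovery of such a waveform from uniform samples at spacing $T_{\rm{svd}}$, the sampling interval must satisfy $T_{\rm{svd}}\le \pi/\omega_{\max}$, which rearranges to \eqref{sampling constraint}. Equivalently, $1/T_{\rm{svd}}$ must exceed twice the maximum Doppler frequency $v_{\max}f_0/c$.

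The main obstacle I anticipate is the Doppler-difference bookkeeping. Naively one might fear that $\omega_{m,j}^{(l)}=\omega_{p,m}-\omega_{q,j}$ could reach $\pm 4\pi v_{\max}f_0/c$, which would tighten the Nyquist criterion by a factor of two. The resolution, suggested by the structure of Lemma~\ref{Lemma of relevance model}, is that after the angle-delay projection the non-orthogonal multipath pairs that survive in the exponential model share the same UPA-delay signature, which constrains the admissible Doppler differences to within the one-sided Doppler support $\omega_{\max}$ rather than the two-sided spread $2\omega_{\max}$. Formalising this geometric coupling between the surviving index set $\mathcal{X}_{i,j}$ in \eqref{channel inner product DFT} and the Doppler angles $\phi_{p,m}$ is the nontrivial step; once it is in place, Corollary~\ref{corollary} follows immediately.
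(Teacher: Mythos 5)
Your proposal is correct and takes essentially the same route as the paper: the paper offers no proof beyond asserting that Corollary~\ref{corollary} is ``easily derived'' by imposing the Nyquist sampling criterion on the complex-exponential channel-weight model \eqref{weight prediction model}, whose frequencies it identifies with (differences of) Doppler shifts bounded by $2\pi v_{\max}f_0/c$. If anything you are more careful than the paper, since you explicitly flag --- and give the right physical resolution of --- the factor-of-two concern that a difference of two path Dopplers could in principle reach $4\pi v_{\max}f_0/c$, a point the paper passes over in silence.
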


This corollary points out an upper bound of the SVD cycle given a maximum UE velocity. Considering a TDD system with center frequency $f_0=$ 3.5 GHz  and subframe length $\Delta _t=$ 0.5 ms, if the maximum speed of UE is about 123 km/h, the SVD cycle should not exceed $T_{\rm{svd}}=1.5$ ms. In this case, the eigenvector samples update every 3 subframes. 

The channel weight estimation based on \eqref{weight prediction model} is a classic exponential parameter estimating problem. Many traditional algorithms, like Prony \cite{1982prony}, maximum likelihood \cite{1986ML}, Matrix Pencil \cite{1990MP}, are capable to solve this problem. Due to the low complexity and robustness to noise, we utilize Matrix Pencil method to estimate the complex exponential model \eqref{weight prediction model} under a prediction order constraint $N_{\rm{svd}}\ge2L_{\rm{svd}}$. The detailed Matrix Pencil algorithm is omitted in our paper and can be found in \cite{1990MP,2022ziao}. Based on the estimated complex exponential model \eqref{weight prediction model}, the interpolated channel weight ${{\hat a}_{m,j}}\left( {{t_p}} \right)$ at subframe $t_p$ is 
\begin{equation}\label{weight interpolation equation}
{{\hat a}_{m,j}}\left( {{t_p}} \right) = \sum\limits_{l = 1}^{{L_{{\rm{svd}}}}} {b_{m,j}^{\left( l \right)}{e^{j\frac{{\omega _{m,j}^{\left( l \right)}}}{{{T_{{\rm{svd}}}}}}{t_p}}}} , 
\end{equation}
where the subframe $t_{p}$ satisfies
\begin{equation}\label{interpolated interval}
 \bmod \left( {t_p - {t_{{\rm{in}}}},T_{\rm{svd}}} \right) \ne 0,t_p \in \left[ {{t_{{\rm{in}}}},{t_{{\rm{in}}}} + \left( {{N_{\rm{svd}}} - 1} \right){T_{{\rm{svd}}}}} \right].
 \nonumber
\end{equation} 
Combined with the CSI, the eigenvector is reconstructed by 
\begin{equation}\label{eigen vector reconstruction}
   \hat {\bf{u}}_m\left( {{t_p}} \right) = \sum\limits_{j = 1}^M {{{\hat a}_{m,j}}}\left( {{t_p}} \right){{\bf{h}}_j} \left( {{t_p}} \right). 
\end{equation}
The whole EGVP algorithm is summarized in Algorithm \ref{alg EGVP}. 
\begin{algorithm} [H]                    
\caption{EGVP algorithm with perfect CSI in one $T_{\rm{egsp}}$}          
\label{alg EGVP}                           
\begin{algorithmic} [1]
\normalsize           
\STATE Initialize $T_{\rm{svd}},L_{\rm{svd}},T_{\rm{egsp}}$, $t_{\rm{in}}$, $t_{\rm{ed}}$ and obtain ${\bf{H}}\left( t \right)$;
\STATE Obtain periodic eigenvector samples based on \eqref{SVD equation} and calibrate the phase as $\overline {\bf{u}}_m \left( t \right)$;
    \STATE Decompose the eigenvector $\overline {\bf{u}}_m \left( t \right)$ and obtain the  channel weights $a_{m,j}\left(t\right)$ with \eqref{eigenvector decomposition}; 
    \STATE Generate a prediction matrix with $a_{m,j}\left(t\right)$ 
    and estimate the model in \eqref{weight prediction model} using Matrix Pencil method; 
        \FOR{$t \in \left[ {{t_{{\rm{in}}}},{t_{{\rm{ed}}}}} \right] \cap \,\bmod \,\left( {t - {t_{{\rm{in}}}},{T_{{\rm{svd}}}}} \right) \ne 0$}
            \STATE Interpolate the channel weight utilizing \eqref{weight interpolation equation}
            and reconstruct the eigenvector with the interpolated channel weights and the CSI based on \eqref{eigen vector reconstruction};
            \STATE Update $t = t+1$;
        \ENDFOR
\STATE Return the eigenvectors $\hat {\bf{u}}_m( t )$ and obtain the precoders.
\end{algorithmic}
\end{algorithm}

In this section, we demonstrate the specific procedures of EGVP method where the perfect downlink CSI is available. However, in fast time-varying channel, the non-neglectable CSI delay problem declines the system performance. The delayed CSI will be dealt with in EGVG-FMPP method in the next section.
\section{EGVP-FMPP with delayed CSI}\label{sec4}
In Sec. \secref{sec3}, the precoders are predicted where perfect and timely CSI is assumed. The fast time-varying channel in practical systems makes it highly challenging to obtain the timely CSI for the BS. To be more specific, the delay of CSI includes three parts: the signal transmission delay ${T_{{\text{trs}}}^d}$, the SVD delay ${T_{{\text{svd}}}^d}$ and the interpolation delay of EGVP ${T_{{\text{int}}}^d}$. The total CSI delay is ${T^d} = T_{{\text{trs}}}^d + T_{{\text{int}}}^d + T_{{\text{svd}}}^d$. The signal transmission delay is associated with the Doppler effect. Moreover, the SVD delay cannot be overlooked due to the limited SVD capability of the base station.

In this section, we consider the more realistic setting with CSI delays. A channel prediction method, i.e., FMPP method is introduced in EGVP method to predict the precoders given the delayed CSI. It shows distinct complexity advantage over the traditional MP method \cite{1990MP}. The core idea of EGVP-FMPP lies in the CSI prediction which is based on the channel reciprocity in TDD and uplink channel parameter prediction before applying EGVP procedure. The predicted channel is
\begin{equation}\label{channel prediction model}
    {\hat {\bf{h}}_m}\left( {{t+T^d}} \right)= \sum\limits_{n = 1}^{{N_t}{N_f}} {{\eta _{m,n}}\left( {{t+T^d}} \right){{\bf{w}}_n}},
\end{equation}
where the complex amplitude ${{\eta _{m,n}}\left( t \right)}$ can be approximated by a summation of complex exponential functions \cite{2022ziao}
\begin{equation}\label{model of eta}
    {\eta _{m,n}}\left( t \right) = \sum\limits_{l = 1}^{{L_{{\rm{ce}}}}} {{\lambda _{m,n,l}}{e^{j{\omega _{m,n,l}}t}}}, 
\end{equation}
where ${L_{{\rm{ce}}}}$ is the number of Doppler shifts ${{e^{j{\omega _{m,n,l}}t}}}$. The complex amplitudes ${{\lambda _{m,n,l}}}$ denote the channel gains and random phases.
Considering the consistency of deployment, we still apply the idea of Matrix Pencil method here to acquire the exponential model \eqref{model of eta}. However, the proposed FMPP method shows a complexity advantage compared to the original Matrix Pencil method. The details of FMPP method are elaborated below.

Within a sampling interval $t \in \left[ {{t_{\rm{1}}},{t_{N_{{\rm{ce}}}}}} \right]$, the total ${N_{{\rm{ce}}}}$ history complex amplitudes ${\eta _{m,n}}\left( t \right)$ are utilized to construct two Hankel prediction matrices. The prediction order ${L_{{\rm{ce}}}}$ satisfies ${N_{{\rm{ce}}}} \ge 2 {L_{{\rm{ce}}}}$. For representation simplicity, the subscripts $\left({m,n}\right)$ are dropped below. One Hankel matrix is ${{\boldsymbol{\eta }}_1} \in {\mathbb{C}^{\left( {{N_{{\rm{ce}}}} - {L_{{\rm{ce}}}}} \right) \times {L_{{\rm{ce}}}}}}$ 
 \begin{equation}\label{channel prediction matrix one}
{{\boldsymbol{\eta }}_1}{\rm{ = }}\left[ {\begin{array}{*{20}{c}}
{\eta \left( {{t_{{L_{{\rm{ce}}}} + 1}}} \right)}&{\eta \left( {{t_{{L_{{\rm{ce}}}}}}} \right)}& \cdots &{\eta \left( {{t_2}} \right)}\\
{\eta \left( {{t_{{L_{{\rm{ce}}}} + 2}}} \right)}&{\eta \left( {{t_{{L_{{\rm{ce}}}} + 1}}} \right)}& \cdots &{\eta \left( {{t_3}} \right)}\\
 \vdots & \vdots & \ddots & \vdots \\
{\eta \left( {{t_{{N_{{\rm{ce}}}}}}} \right)}&{\eta \left( {{t_{{N_{{\rm{ce}}}} - 1}}} \right)}& \cdots &{\eta \left( {{t_{{N_{{\rm{ce}}}} - {L_{{\rm{ce}}}} + 1}}} \right)}
\end{array}} \right].
\nonumber
\end{equation}   

Similarly, the other is 
${{\boldsymbol{\eta }}_0} \in {\mathbb{C}^{\left( {{N_{{\rm{ce}}}} - {L_{{\rm{ce}}}}} \right) \times {L_{{\rm{ce}}}}}}$ 
 \begin{equation}\label{channel prediction matrix other}
{{\boldsymbol{\eta }}_0}{\rm{ = }}\left[ {\begin{array}{*{20}{c}}
{\eta \left( {{t_{{L_{{\rm{ce}}}}}}} \right)}&{\eta \left( {{t_{{L_{{\rm{ce}}}} - 1}}} \right)}& \cdots &{\eta \left( {{t_1}} \right)}\\
{\eta \left( {{t_{{L_{{\rm{ce}}}} + 1}}} \right)}&{\eta \left( {{t_{{L_{{\rm{ce}}}}}}} \right)}& \cdots &{\eta \left( {{t_2}} \right)}\\
 \vdots & \vdots & \ddots & \vdots \\
{\eta \left( {{t_{{N_{{\rm{ce}}}} - 1}}} \right)}&{\eta \left( {{t_{{N_{{\rm{ce}}}} - 2}}} \right)}& \cdots &{\eta \left( {{t_{{N_{{\rm{ce}}}} - {L_{{\rm{ce}}}}}}} \right)}
\end{array}} \right].
\nonumber
\end{equation}   

The following relationship holds according to \cite{1990MP}
   \begin{equation}\label{mp prediction solve first}
{{\boldsymbol{\eta }}_0} = {{\bf{Z}}_1}{\bf{\Lambda }}{{\bf{Z}}_2},{{\boldsymbol{\eta }}_1} = {{\bf{Z}}_1}{\bf{\Lambda }}{{\bf{Z}}_0}{{\bf{Z}}_2},
\end{equation} 
where the matrices ${{{\bf{Z}}_0}}\in {\mathbb{C}^{{L_{{\rm{ce}}}} \times {L_{{\rm{ce}}}}}}$, ${{{\bf{Z}}_1}}\in {\mathbb{C}^{\left( {{N_{{\rm{ce}}}} - {L_{{\rm{ce}}}}} \right) \times {L_{{\rm{ce}}}}}}$ and ${{{\bf{Z}}_2}}\in {\mathbb{C}^{{L_{{\rm{ce}}}} \times  { {L_{{\rm{ce}}}}} }}$ are obtained by the exponentials ${{e^{j{\omega _{m,n,l}}t}}}$ according to the definitions in \cite{1990MP}. The diagonal amplitude matrix is ${\bf{\Lambda }}$ and the diagonal elements of it are ${{\lambda _{m,n,l}}}$. Based on the definition,  ${{\bf{Z}}_1}$ is of full column rank  and ${{\bf{Z}}_2}$ is of full row rank. Hence, Eq. \eqref{mp prediction solve first} is equivalently written as
\begin{equation}\label{mp prediction solution recruisive}
{\bf{\Lambda }} = {{\bf{Z}}_1}^\dag {{\boldsymbol{\eta }}_0}{{\bf{Z}}_2}^\dag ,{{\boldsymbol{\eta }}_1} = {{\bf{Z}}_1}{\bf{\Lambda }}{{\bf{Z}}_0}{{\bf{Z}}_2} = {{\boldsymbol{\eta }}_0}{{\bf{Z}}_2}^\dag {{\bf{Z}}_0}{{\bf{Z}}_2}.
\end{equation}
Therefore, the following relationship can be derived
\begin{align}\label{prediction model of channel}
{{\boldsymbol{\eta }}_{{T^d} + 1}} &= {{\boldsymbol{\eta }}_{{T^d} - 1}}{{\boldsymbol{Z}}_2}^\dag {{\bf{Z}}_0}{{\bf{Z}}_2} = {{\boldsymbol{\eta }}_{{T^d} - 2}}{{\bf{Z}}_2}^\dag {{\bf{Z}}_0}^2{{\bf{Z}}_2} =  \cdots \\
 \nonumber
 &= {{\boldsymbol{\eta }}_0}{{\bf{Z}}_2}^\dag {\left( {{{\bf{Z}}_0}} \right)^{{T^d}}}{{\bf{Z}}_2}.
\end{align}
As a result, the element of the first column and last row of ${{\boldsymbol{\eta }}_{{T^d} + 1}}$ is the predicted amplitude by $T^d$ subframes, i.e., ${\eta _{m,n}}\left( {{t_d}} \right)$, where ${t_d} = {t_1} + {N_{{\rm{ce}}}} + {T^d}$. Hence, the predicted channel ${\hat {\bf{h}}_m}\left( {{t_d}} \right)$ is obtained by \eqref{channel prediction model}. Similar to EGVP method in \secref{sec3}, substitute the channels with ${\hat {\bf{h}}_m}\left( {{t_d}} \right)$ in \eqref{eigen vector reconstruction} and the precoders can be predicted. In this case, the EGVP method with predicted CSI is called EGVP-FMPP in our paper. 

The complexity advantage of the FMPP over the traditional MP method \cite{1990MP} is distinct. The complexity of the MP is ${\mathcal{O}}\left( {{N_t}{N_f}\left( {{N_{{\text{ce}}}}^2 + {L_{{\text{ce}}}}^2 + {L_{{\text{ce}}}}\log {T^d}} \right)} \right)$ whereas the complexity of the FMPP is ${\mathcal{O}}\left( {{N_t}{N_f}\left( {{L_{{\text{ce}}}}^2\left( {{N_{{\text{ce}}}} - {L_{{\text{ce}}}}} \right) + {L_{{\text{ce}}}}\log {T^d}} \right)} \right)$. Since ${N_{{\text{ce}}}} \geqslant 2{L_{{\text{ce}}}}$, the complexity reduction from the MP to the FMPP is ${\mathcal{O}}\left( {{N_t}{N_f}\left( {{N_{{\text{ce}}}}^2 + {L_{{\text{ce}}}}^2 + {L_{{\text{ce}}}}^3 - {L_{{\text{ce}}}}^2{N_{{\text{ce}}}}} \right)} \right) \geqslant {\mathcal{O}}\left( {{N_t}{N_f}{L_{{\text{ce}}}}^2\left( {5 - {L_{{\text{ce}}}}} \right)} \right)$. Therefore, given $L_{\rm{ce}} \le 5$, the FMPP method shows a distinct complexity advantage. Moreover, this condition is often valid due to the channel sparsity in massive MIMO systems.

The proposed FMPP method is also different from our previous channel prediction method, joint angel-delay-Doppler (JADD) in FDD \cite{2022ziao}. To be more specific, the downlink channel prediction in JADD scheme depends on the feedback amplitudes and the estimated exponetials. On the contrary, FMPP method only relies on the estimated exponetials predicted by the uplink channel samples, which means that there is no need to compute the amplitudes.

In the next section, the theoretical performances of our two EGVP schemes are demonstrated.



\section{Performance analysis}\label{sec5}
In this section, we first analyze how many samples are sufficient to achieve asymptotically error-free channel prediction and eigenvector prediction. Then a complexity analysis of our EGVP method is given to demonstrate its advantage over the state of the art (SoTA) schemes. 
\subsection{Prediction error performance}
First, the asymptotic channel prediction performance is derived in the following theorem. 
\begin{theorem}\label{theorem for asmptotic channel estimation}
     When ${N_t},{N_f} \to \infty$ and the number of channel samples ${N_{{\rm{ce}}}} = 2$, the channel prediction ${\hat {\bf{h}}_m}\left( t \right)$ obtained by FMPP is an asymptotically error-free estimation.  
     \begin{equation}\label{asmptotic channel prediction error conclusion}
            \begin{array}{l}
            \mathop {\lim }\limits_{{N_t},{N_f} \to \infty } \mathbb{E}\left\{ {\frac{{\left\| {{{\bf{h}}_m}\left( t \right) - {{\hat {\bf{h}}}_m}\left( t \right)} \right\|_2^2}}{{\left\| {{{\bf{h}}_m}\left( t \right)} \right\|_2^2}}} \right\}
             = 0.
            \end{array}
    \end{equation}
\end{theorem}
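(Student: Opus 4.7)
The plan hinges on two asymptotic facts: (i) the DFT angular--delay projection decouples distinct multipaths as $N_t, N_f \to \infty$, so that the effective order of each bin's Doppler spectrum collapses to one; and (ii) with $L_{ce} = 1$, the Matrix Pencil identity underlying FMPP inverts a single complex exponential exactly from two consecutive samples. Once these two facts are in place, synthesizing the predicted channel via \eqref{channel prediction model} and invoking Assumption \ref{assumption for asymptotic analysis} closes the argument.

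First I would show that for every $n$ in the asymptotic support the amplitude $\eta_{m,n}(t) = \langle {\bf{w}}_n, {\bf{h}}_m(t)\rangle$ reduces to a single-path contribution. Starting from \eqref{wideband channel}, the inner product of ${\bf{w}}_n$ with a path signature ${\bf{d}}_m(\theta_{p,m},\varphi_{p,m},\tau_{p,m})$ is a product of Dirichlet-type kernels in the horizontal angle, the vertical angle, and the delay; each kernel concentrates on a unique matched bin $n_p^\star$ with peak amplitude $\Theta(\sqrt{N_t N_f})$ and decays elsewhere. Consequently, $\eta_{m,n}(t) = \beta_{p,m} e^{j\omega_{p,m} t} + o(1)$ on the matched bin and $o(1)$ otherwise, so the effective order in \eqref{model of eta} becomes $L_{ce}=1$ per active bin in the limit. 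With this single-exponential reduction, the FMPP algebra in \eqref{mp prediction solve first}--\eqref{prediction model of channel} specializes sharply: the Hankel ``matrices'' collapse to scalars ${\boldsymbol{\eta}}_0 = \eta(t_1)$ and ${\boldsymbol{\eta}}_1 = \eta(t_2)$, their ratio yields ${\bf{Z}}_0 = e^{j\omega \Delta_t}$ exactly, and the extrapolation produces $\hat\eta_{m,n}(t) = \eta_{m,n}(t)$ for every active bin, while inactive bins contribute $o(1)$ on both sides.

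Combining the two steps via the orthonormality of $\{{\bf{w}}_n\}$ yields
\begin{equation}
\| {\bf{h}}_m(t) - \hat{\bf{h}}_m(t) \|_2^2 = \sum_n |\eta_{m,n}(t) - \hat\eta_{m,n}(t)|^2 = o(N_t N_f),
\nonumber
\end{equation}
whereas $\|{\bf{h}}_m(t)\|_2^2 = \Theta(N_t N_f)$ by \eqref{asympototic results channel one}. The normalized mean-square error therefore vanishes and \eqref{asmptotic channel prediction error conclusion} follows. The main obstacle is the rate control in step one: the aggregate squared DFT leakage over the $\Theta(N_t N_f)$ non-matched bins must be shown to be $o(N_t N_f)$, not merely pointwise vanishing. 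I would handle this by invoking Parseval on each path signature (which fixes its total energy) together with a Dirichlet-kernel tail estimate: this is precisely the concentration mechanism already used to justify Lemma \ref{Lemma of relevance model}, and since the number of paths $P$ is finite, the per-path tails add up harmlessly.
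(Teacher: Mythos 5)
Your proposal follows essentially the same route as the paper's Appendix~\ref{Appendix theorem for asmptotic channel estimation}: reduce each active angle--delay bin to a single Doppler exponential via the asymptotic projection result \eqref{asmptotic complex amplitude model} from Appendix~\ref{Appendix for lemma of relevance model}, observe that $L_{\rm{ce}}=1$ with two samples recovers that exponential exactly, and conclude that the normalized error vanishes. Your final paragraph on controlling the aggregate DFT-leakage energy across non-matched bins is a welcome tightening of a step the paper treats as immediate, but it is a refinement of the same argument rather than a different one.
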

\begin{proof}
    \quad \emph{Proof:} Please refer to Appendix \ref{Appendix theorem for asmptotic channel estimation}.
\end{proof}

Theorem \ref{theorem for asmptotic channel estimation} shows that two history channel samples are enough to asymptotically achieve an error-free channel prediction in FMPP method. Furthermore, the PE is irrelevant with the channel delay $T^d$, which means that the effect of delayed CSI can be compensated. 

We should note that this conclusion is based on the assumption that the angle-delay-Doppler structure of the channel barely varies within $T^d$. This assumption usually holds in moderate mobility \cite{2020yinMobility,2022ziao}, e.g., $T^d=5$ ms, $v=60$ km/h.  

Then, we aim to analyze the eigenvector PE of EGVP which is measured by the normalized mean square error (NMSE) between the eigenvectors ${\overline {\bf{u}}_m \left( t \right)}$ and the interpolated eigenvectors ${{{\hat {\bf{u}}}_m}\left( t \right)}$. 
\begin{theorem}\label{theorem for asmptotic analysis of EGVP}
    When ${N_t},{N_f} \to \infty$ and the number of eigenvector samples $N_{\rm{svd}}\ge 2N_{\rm{no}}^{\rm{max}}$, EGVP method leads to an asymptotically error-free eigenvector prediction
    \begin{equation}\label{asmptotic EGVP prediction error}
          \mathop {\lim }\limits_{{N_t},{N_f} \to \infty } \mathbb{E}\left\{ {\left\| {\frac{{{{\overline {\bf{u}} }_m}\left( t \right) - {{\hat {\bf{u}}}_m}\left( t \right)}}{{{{\overline {\bf{u}} }_m}\left( t \right)}}} \right\|_2^2} \right\} = 0,
    \end{equation}
  where $N_{\rm{no}}^{\rm{max}}$ denotes the maximum number of non-orthogonal multi-paths among all downlink channels.
\end{theorem}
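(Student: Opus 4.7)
The plan is to reduce the eigenvector prediction error to a channel weight interpolation error via Proposition~\ref{Proposition eigen decomposition}, and then invoke Theorem~\ref{theorem weight prediction model} together with the exact recoverability of complex exponential models by Matrix Pencil to drive this error to zero asymptotically. First I would substitute the decompositions
\begin{equation}
\overline{\mathbf{u}}_m(t)=\sum_{j=1}^{M}a_{m,j}(t)\mathbf{h}_j(t),\qquad \hat{\mathbf{u}}_m(t)=\sum_{j=1}^{M}\hat{a}_{m,j}(t)\mathbf{h}_j(t),
\nonumber
\end{equation}
into the numerator of the NMSE, use the triangle inequality followed by Cauchy--Schwarz, and divide by $\|\overline{\mathbf{u}}_m(t)\|_2^2$. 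Invoking Assumption~\ref{assumption for asymptotic analysis}, both $\|\mathbf{h}_j(t)\|_2^2/(N_tN_f)$ and $\|\overline{\mathbf{u}}_m(t)\|_2^2/(N_tN_f)$ converge to finite non-zero constants, so the NMSE is upper bounded by a finite multiple of $\sum_j \mathbb{E}\{|a_{m,j}(t)-\hat{a}_{m,j}(t)|^2\}$. It therefore suffices to prove that each channel weight error vanishes.

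Next I would apply Theorem~\ref{theorem weight prediction model}: asymptotically, $a_{m,j}(t)$ is exactly a sum of $L_{i,j}$ complex exponentials, and with the practical choice $L_{\rm{svd}}=N_{\rm{no}}^{\rm{max}}$ all $L_{i,j}\le L_{\rm{svd}}$ since the order is upper bounded by the maximum number of non-orthogonal multipaths across all downlink channels. The EGVP scheme samples the phase-calibrated eigenvectors every $T_{\rm{svd}}$ (which, by Corollary~\ref{corollary}, respects Nyquist) and extracts $N_{\rm{svd}}\ge 2L_{\rm{svd}}=2N_{\rm{no}}^{\rm{max}}$ noiseless weight samples per EGVP interval via the decomposition in~\eqref{eigenvector decomposition}. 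For a pure sum of at most $L_{\rm{svd}}$ exponentials sampled at $\ge 2L_{\rm{svd}}$ uniformly spaced points, Matrix Pencil recovers the poles $\{e^{j\omega_{m,j}^{(l)}T_{\rm{svd}}}\}$ and amplitudes $\{b_{m,j}^{(l)}\}$ exactly, so the reconstruction~\eqref{weight interpolation equation} coincides with the true $a_{m,j}(t_p)$ in the asymptotic regime.

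Finally I would combine the two steps: the asymptotic equivalence in Theorem~\ref{theorem weight prediction model} is an equality in the limit $N_tN_f\to\infty$, the Matrix Pencil recovery of a model that is exact in this limit is itself exact, and hence $|a_{m,j}(t)-\hat{a}_{m,j}(t)|\to 0$ almost surely for every $t$ in the EGVP interval that satisfies~\eqref{interpolated interval}. Feeding this back into the NMSE bound from the first paragraph yields~\eqref{asmptotic EGVP prediction error}. The main obstacle I anticipate is making the ``exact recovery'' step rigorous when Theorem~\ref{theorem weight prediction model} only asserts asymptotic (rather than exact at finite $N_t,N_f$) agreement with the exponential model: one must argue that the perturbation induced by the residual non-exponential component is uniformly small compared to the minimum separation between the poles $\omega_{m,j}^{(l)}$, so that the pencil's eigendecomposition is continuous at the nominal model and the recovered amplitudes and poles converge jointly with $N_tN_f\to\infty$. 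A secondary subtlety is that Theorem~\ref{theorem weight prediction model} guarantees only \emph{some} exponential representation of $a_{m,j}(t)$ (eigenvectors are unique only up to phase), but since phase ambiguity is eliminated by the calibration in~\eqref{phase calibration of SVD} across all sample instants, the same representation is consistent across the $N_{\rm{svd}}$ samples and Matrix Pencil extracts it without ambiguity.
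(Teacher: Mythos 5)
Your proposal follows essentially the same route as the paper's Appendix F: both reduce the eigenvector NMSE to the channel weight error via the decomposition $\overline{\mathbf{u}}_m(t)-\hat{\mathbf{u}}_m(t)=\sum_j \mathbf{h}_j(t)\left(a_{m,j}(t)-\hat a_{m,j}(t)\right)$, then invoke Theorem~\ref{theorem weight prediction model} and Lemma~\ref{Lemma of relevance model} to identify the weights with an $N_{\rm{no}}^{\rm{max}}$-order exponential model that Matrix Pencil recovers exactly from $N_{\rm{svd}}\ge 2N_{\rm{no}}^{\rm{max}}$ samples. Your explicit Cauchy--Schwarz normalization bound and your flagged subtleties (asymptotic-only exactness of the model and non-uniqueness of the exponential representation) are minor refinements of points the paper handles by working with the particular solution $\overline{\mathbf{a}}_m(t)$ and by asserting that the order of $\widetilde{\mathcal{U}}s_{1,2}(t)^2$ remains $N_{\rm{no}}^{\rm{max}}$.
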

\begin{proof}
    \quad \emph{Proof:} Please refer to Appendix \ref{Appendix theorem for asmptotic eigen prediction}.
\end{proof}  

Theorem \ref{theorem for asmptotic analysis of EGVP} gives a minimum number of eigenvector samples to achieve an error-free EGVP. It is associated with the maximum number of non-orthogonal multi-paths among all downlink channels. In fact, it is exhausting to find the exact $N_{\rm{no}}^{\rm{max}}$ in practice due to the complicated channel environments. Even though it is possible to detect the size of $N_{\rm{no}}^{\rm{max}}$ \cite{2008detect}, a fixed sample quantity is more efficient. Furthermore, a small $N_{\rm{no}}^{\rm{max}}$ means more frequent eigenvector interpolation and is possible thanks to the channel sparsity of wideband massive MIMO \cite{2013YinJSAC}. Therefore, we tend to configure a small fixed $N_{\rm{no}}^{\rm{max}}$ and will testify the performance with numerical results in the next section. 

\subsection{Complexity analysis}
One of the major motivations of EGVP scheme is to reduce the complexity of the precoding matrix acquisition at the BS. To demonstrate the complexity advantage of EGVP scheme, four SoTA schemes are introduced for comparison, i.e., full-time SVD, periodic SVD, Wiener prediction and approximated gram matrix interpolation (AGMI) scheme. The downlink CSI is prior-known by the BS in the complexity analysis.

The full-time SVD scheme refers to the traditional scheme where the precoders are updated in every subframes like \cite{2020yinMobility,2022JSACParallel}. The complexity order of it is ${\mathcal{O}}\left( {{N_{{\text{svd}}}}{T_{{\text{svd}}}}M{{\left( {{N_f}{N_t}} \right)}^2}} \right)$.

The periodic SVD scheme means that a periodic precoder updating scheme is adopted at the BS, similar to the current precoding scheme in 5G NR \cite{3gpp214}. AGMI scheme can be utilized to reduce the complexity of precoding in the frequency domain \cite{2020gramesti}, but here, the eigenvectors are interpolated in the time domain
\begin{small}
   \begin{equation}\label{AGMI method}
{\hat {\mathbf{u}}_1}\left( {{t_p}} \right) = {\hat {\mathbf{u}}_1}\left( {{t_k}} \right)\left( {\frac{{{t_{k + 1}} - {t_p}}}{{{T_{{\text{svd}}}}}}} \right) + {\hat {\mathbf{u}}_1}\left( {{t_{k + 1}}} \right)\left( {1 - \frac{{{t_{k + 1}} - {t_p}}}{{{T_{{\text{svd}}}}}}} \right),
\end{equation} 
\end{small}
where the interpolated subframe ${t_p}$ satisfies ${t_k} < {t_p} < {t_{k + 1}}$. ${t_k},{t_{k + 1}}$ are the subframes of the $k$-th and $k+1$-th eigenvector samples, respectively. The complexity order of periodic SVD and AGMI scheme are the same as ${\mathcal{O}}\left( {{N_{{\text{svd}}}}M{{\left( {{N_f}{N_t}} \right)}^2}} \right)$.

Wiener prediction method is a classic interpolation method for discrete linear time-invariant system \cite{1997wierner,2015Wiener}. It first utilizes the prior-known auto-correction sequences ${{\bf{r}}_a}\left( \tau  \right)$ of the signals to calculate all $L_w$ order wiener filter coefficients through the Wiener-Hopf equation \cite{hayes1996statistical}
\begin{equation}\label{wiener filter coefficients}
    {\bf{w}} = {{\bf{R}}_a}{\left( \tau \right)^\dag }{{\bf{r}}_a}\left( {\tau + {T^d}} \right),
\end{equation}
where ${{\bf{R}}_a}\left( \tau \right)$ is the Hermitian Toeplitz matrix of ${{\bf{r}}_a}\left( \tau \right)$. The number of interpolated subframes is $T^d$. 

In this paper, the precoders are predicted with a linear combination of the $L_w=4$ wiener filter coefficients and recent $L_w$ eigenvector samples
\begin{equation}\label{wiener prediction equation}
    {{{u}}_m}\left( {t + {T^d}} \right) = \sum\limits_{l = 1}^{{L_w}} {w\left( l \right){{{u}}_m}\left( {t - l} \right)}, 
\end{equation}
where $w\left( l \right)$ is the element of ${\bf{w}}$ and ${{u_m}\left( t \right)},t \in \left[ {0,T - 1} \right]$ is the element of the eigenvector ${{{\bf{u}}_m}\left( t \right)}$. $T$ is the period of time. The auto-correction sequences are obtained by 
  \begin{equation}\label{difination of wiener autocorrelation}
    {{\bf{r}}_a}\left( \tau  \right) = {\mathbb{E}}\left\{ {{u_m}\left( t \right){u_m}{{\left( {t - \tau } \right)}^*}} \right\},\tau  \in \left[ { - T + 1,T - 1} \right].
    \nonumber
\end{equation}  
Therefore, the complexity order of Wiener prediction is ${\mathcal{O}}\left( {{N_{{\text{svd}}}}M{{\left( {{N_f}{N_t}} \right)}^2}} \right) + {\mathcal{O}}\left( {{N_{{\text{svd}}}}^2{T_{{\text{svd}}}}^2\left( {{N_f}{N_t}} \right)} \right)$. 

According to \secref{sec3}, the complexity order of EGVP is ${\mathcal{O}}\left( {{N_{{\text{svd}}}}M{{\left( {{N_f}{N_t}} \right)}^2}} \right) + {\mathcal{O}}\left( {{N_{{\text{svd}}}}{T_{{\text{svd}}}}{M^2}\left( {{N_f}{N_t}} \right)} \right)$. Obviously, the complexity order of the full-time SVD scheme is larger than the others by ${\mathcal{O}}\left( {\left( {{T_{{\text{svd}}}} - 1} \right){{\left( {{N_f}{N_t}} \right)}^2}} \right)$. The complexity order of EGVP increases ${\mathcal{O}}\left( {{N_{{\text{svd}}}}{T_{{\text{svd}}}}{M^2}{N_f}{N_t}} \right)$ compared to the periodic SVD and AGMI scheme. However, compared to the Wiener prediction scheme, EGVP scheme reduces the complexity order by ${\mathcal{O}}\left( {{N_{{\text{svd}}}}{T_{{\text{svd}}}}{N_f}{N_t}\left( {{N_{{\text{svd}}}}{T_{{\text{svd}}}} - M} \right)} \right)$, which is often positive under the practical configuration $M \leqslant 4,{N_{{\text{svd}}}}{T_{{\text{svd}}}} \geqslant 4$. 
\begin{figure}[!t]
\centering
\includegraphics[width=3.2in]{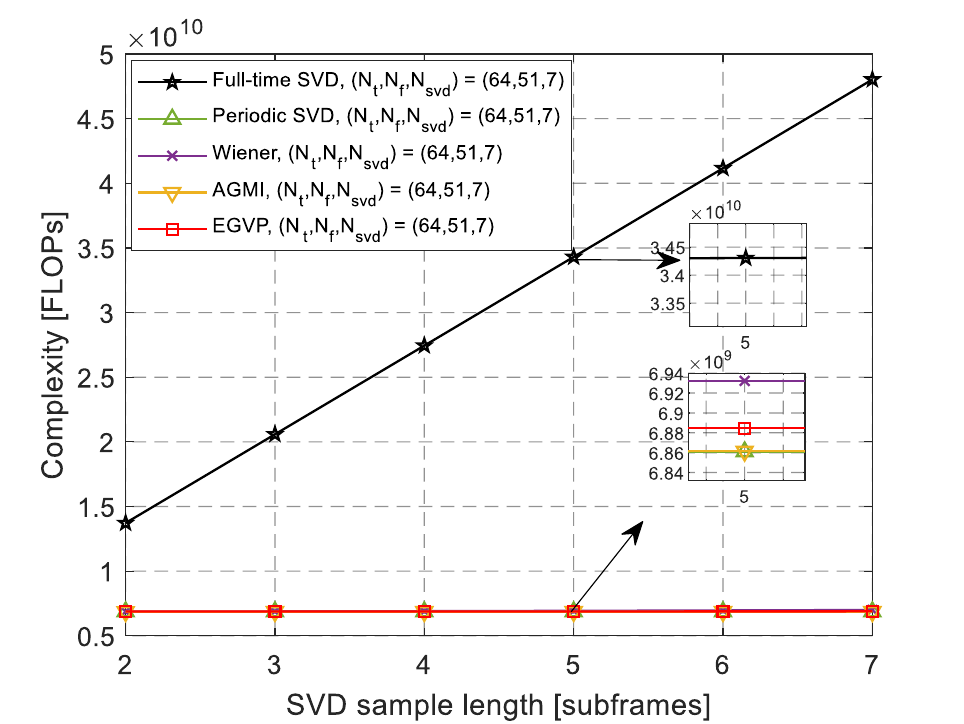}
\caption{Complexity comparison between EGVP and the other schemes in FLOPs under different configurations of the SVD cycle length and the number of eigenvector samples.}
\vspace{-0.3cm}
\label{fig_complexity}
\end{figure}

Furthermore, the complexity within one eigenvector sample cycle ${T_{{\text{egsp}}}} = {N_{{\text{svd}}}}{T_{{\text{svd}}}}$ is measured in floating point operations (FLOPs). Fig. \ref{fig_complexity} shows the complexity of different schemes in FLOPs under different configurations of the SVD cycle length and the number of eigenvector samples. Under a common configuration $\left( {{N_t},{N_f},{N_{{\text{svd}}}},{T_{{\text{svd}}}}} \right) = \left( {64,51,7,7} \right)$, the complexity of EGVP decreases 79.9\% compared to the full-time SVD scheme and $0.87\%$ compared to the Wiener prediction scheme. The complexity of EGVP only increases $0.29\%$ compared to the periodic SVD and AGMI scheme.  

As a result, the complexity advantage of EGVP scheme is explicit and the numerical results of the SE performance of EGVP are further shown in the next section.
\section{Simulation results}\label{sec6}
The numerical results of the proposed EGVP and EGVP-FMPP schemes are discussed in this section. A multi-path channel model widely used in industry, cluster delay line (CDL), is adopted in our simulations. Based on 3GPP specification \cite{3gpp901}, CDL-A channel model is generated with parameters in Table \ref{tab simulation config}. In this model, 23 clusters with 460 multi-paths are aggregated. The delay spread is 300 ns. The center frequency is 3.5 GHz. The bandwidth is 20 MHz and the subcarrier spacing (SCS) is 30 kHz. Hence, 51 resource blocks are configured and each subframe lasts 0.5 ms according to \cite{3gpp104}. The BS is equipped with a UPA configured as $\left( {{N_v},{N_h},{N_{pl}}} \right)= \left( {4,8,2} \right)$, where $N_v$ and $N_h$ are the number of antennas in one column and one row of the antenna array, respectively. The number of polarizations is denoted by $N_{pl}$. At the UE side, the receive antennas of all 8 UEs are configured as $\left( {{N_v},{N_h},{N_{pl}}} \right)= \left( {1,2,2} \right)$. We set the length of SVD cycle as 2.5 ms, i.e., the precoding matrix updates every 5 subframes. The EZF precoding is utilized at the BS and minimum mean mquare error-interference rejection combining (MMSE-IRC) is applied at the UEs. Unless other specified, we configure the prediction order of the eigenvector interpolation $L_{\rm{svd}}$ and channel prediction order $L_{\rm{ce}}$ both as 3. The number of the channel samples and the eigenvector samples are ${N_{{\text{ce}}}} = 2{L_{{\text{ce}}}} + 1$ and ${N_{{\text{svd}}}} = 2{L_{{\text{svd}}}} + 1$, respectively. The system performance is measured in SE or PE. The PE of the eigenvector is calculated by \eqref{asmptotic EGVP prediction error}. And the SE is computed 
 \begin{equation}\label{spectral efficiency}
{R_{{\rm{se}}}}{\rm{ = }}\mathbb{E}\left\{ {\sum\limits_{k = 1}^K {{\rm{log}}\left( {1 + \frac{{\left\| {{{\bf{h}}_k}\left( t,f \right){{\bf{g}}_k}\left( t,f \right)} \right\|_2^2}}{{\sigma _k^2 + \sum\limits_{j \ne k}^K {{{\left| {{{\bf{h}}_k}\left( t,f \right){{\bf{g}}_j}\left( t,f \right)} \right|}^2}} }}} \right)} } \right\},
\nonumber
\end{equation}   
where ${{{\bf{g}}_k}\left( {t,f} \right)}$ is the precoder of the $k$-th UE calculated by EZF and $\sigma _k^2$ is the power of Gaussian noise at the receiver side. The expectation is taken over subcarriers and subframes. The SNR at the UE varies from 0 dB to 30 dB.
\begin{table}[!t]
\centering \protect\protect\caption{System Parameters in Simulations}
\label{tab simulation config}
\begin{tabular}{|c|c|}
\hline
Channel model & CDL-A\tabularnewline
\hline
Bandwidth $B$ & 20 $\rm{MHz}$\tabularnewline
\hline
UL/DL carrier frequency $f_0$ & 3.5 $\rm{GHz}$\tabularnewline
\hline
Subcarrier spacing $\Delta_f$ & 30 $\rm{kHz}$\tabularnewline
\hline
Subframe duration ${\Delta _t}$& 0.5 $\rm{ms}$\tabularnewline
\hline
Resource block $N_f$ & 51 RB\tabularnewline
\hline
Delay spread $\tau_s$ & 300 $\rm{ns}$\tabularnewline
\hline
Number of paths $P$ & 460\tabularnewline
\hline
\tabincell{c}{Transmit antenna \\$\left( {{N_v},{N_h},{N_{pl}}} \right)$} & \tabincell{c}{$\left( {{N_v},{N_h},{N_{pl}}} \right)= \left( {4,8,2} \right)$, \\ the polarization directions are ${0^\circ },{90^\circ }$}\tabularnewline
\hline
\tabincell{c}{Receive antenna \\$\left( {{N_v},{N_h},{N_{pl}}} \right)$} & \tabincell{c}{$\left( {{N_v},{N_h},{N_{pl}}} \right)= \left( {1,2,2} \right)$, \\ the polarization directions are $\pm {45^\circ }$}\tabularnewline
\hline
Number of UEs $K$ & 8\tabularnewline
\hline
SVD cycle $T_{\rm{svd}}$ & 2.5 $\rm{ms}$\tabularnewline
\hline
CSI delay $T_{{\text{trs}}}^d$ & 2.5 $\rm{ms}$\tabularnewline
\hline
EGVP interpolation delay $T_{{\text{int}}}^d$ & 1.5 $\rm{ms}$\tabularnewline
\hline
SVD delay $T_{{\text{svd}}}^d$ & 2.5 $\rm{ms}$\tabularnewline
\hline
Channel estimation order $L_{\rm{ce}}$ & 3 \tabularnewline
\hline
EGVP order $L_{\rm{svd}}$ & 3 \tabularnewline
\hline
\end{tabular}
\end{table}

We consider two CSI acquisition scenarios including the timely CSI case and the delayed CSI case. The CSI delay in the benchmarks is ${T^d} = T_{{\text{trs}}}^d + T_{{\text{svd}}}^d = 5$ ms while ${T^d} = T_{{\text{trs}}}^d + T_{{\text{svd}}}^d + T_{{\text{int}}}^d = 6.5$ ms in EGVP-FMPP scheme considering the channel weight interpolation algorithm. The performance upper bound throughout our simulation refers to an ideal scenario where SVD operates at every subframe, denoted by ``Full-time SVD''. As described in \secref{sec5}, the periodic SVD, Wiener prediction and AGMI schemes are evaluated for performance comparison.

The proposed schemes and benchmarks are evaluated in various scenarios, including different UE speeds, different CSI delays, different BS antenna configurations, 
different SVD cycle lengths and noisy channel sampling cases. Our EGVP scheme is testified under the perfect CSI assumption and EGVP-FMPP scheme indicates that a channel prediction method FMPP is applied to combat the CSI delay $T^d$. 

In Fig. \ref{fig_simulation_low_speed}, the SE performances of EGVP and EGVP-FMPP are demonstrated when $v$ = 30 km/h. When perfect CSI is available, our EGVP scheme approaches the upper bound performance while reducing 79.9\% of complexity. When the SNR equals 30 dB, the EGVP increases SE by 5.5\% over the AGMI scheme and by 15.8\% SE over the periodic SVD scheme, however introducing only 0.29\% additional complexity. The EGVP scheme reduces 0.87\% complexity and enhances 15.5\% SE compared to Wiener scheme. 
However, when the CSI is delayed, the SE of all schemes declines. It is noteworthy that our EGVP-FMPP scheme exhibits less degradation than the benchmarks. The SE performance improvement of EGVP-FMPP scheme varies from 13.2\% to 32.7\% compared to the benchmarks. Therefore, we can conclude that the two EGVP schemes function well in a moderate-mobility system despite the CSI delay.
\begin{figure}[!t]
\centering
\includegraphics[width=3.2in]{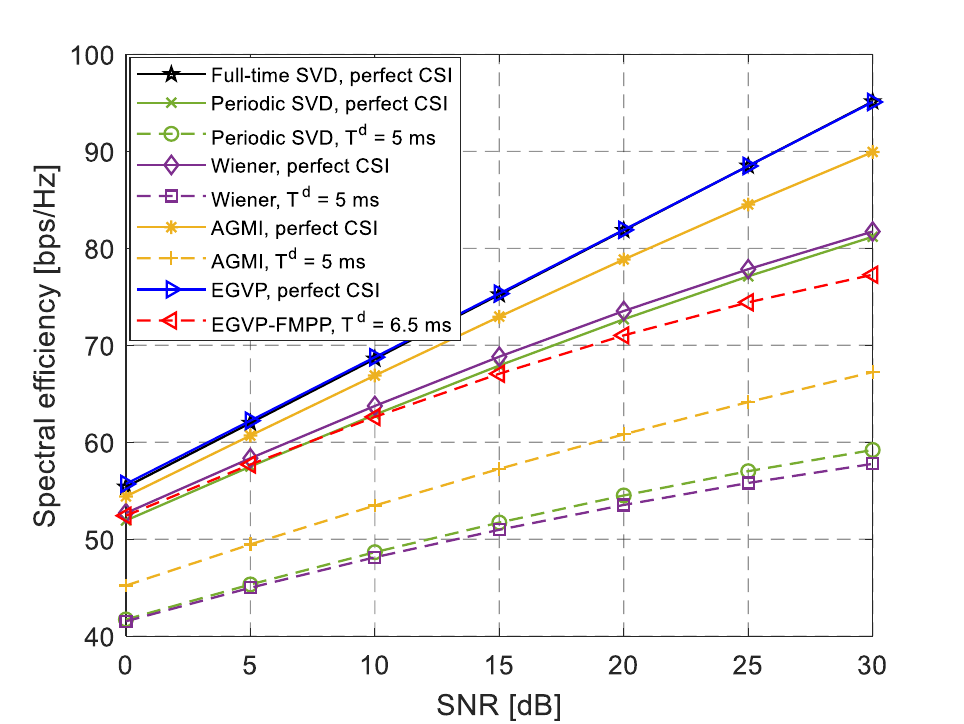}
\caption{SE performances with noise-free channel samples, $v=30$ km/h, $L_{\rm{ce}}=3$.}
\vspace{-0.3cm}
\label{fig_simulation_low_speed}
\end{figure}

Furthermore, the SE performance of the EGVP scheme is shown in Fig. \ref{fig_simulation_high_speed} when the UE speed is $v=60$ km/h. Based on the configuration in Table \ref{tab simulation config} and Corollary \ref{corollary}, the maximum UE velocity is $v_{\rm{max}}=61.8$ km/h. When  perfect CSI is assumed, the EGVP scheme approaches 97.8\% SE of the upper bound and increases 10.7\% over AGMI and 22.3\% SE over Wiener and the periodic SVD scheme. In case of delayed CSI, the orders of the channel estimator for the EGVP-FMPP scheme are configured to $L_{\rm{ce}}=3$ and $L_{\rm{ce}}=6$. Obviously, the larger $L_{\rm{ce}}$ results in better SE performance. It can be concluded that a higher order of channel estimator is more suitable in high mobility scenarios. In this scenario, the EGVP-FMPP scheme demonstrates a greater SE advantage over the benchmarks. To be more precise, when $L_{\rm{ce}}=6$, the EGVP-FMPP scheme enhances the SE performance from 20.6\% to 49.1\% in comparison to the benchmarks.  
\begin{figure}[!t]
\centering
\includegraphics[width=3.2in]{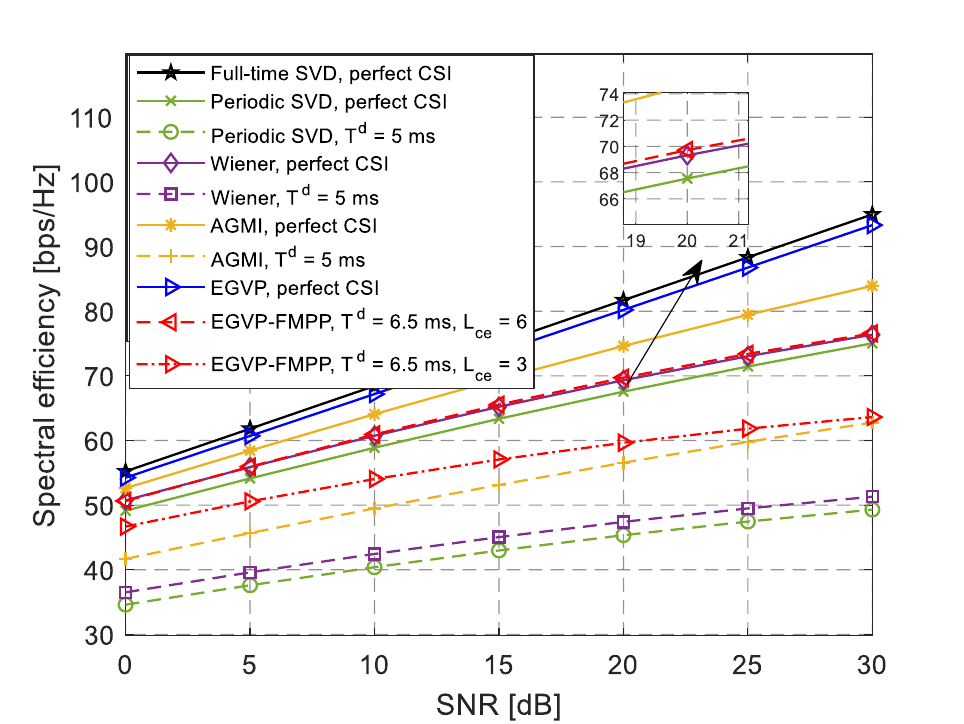}
\caption{SE performances with noise-free channel samples, $v=60$ km/h.}
\vspace{-0.3cm}
\label{fig_simulation_high_speed}
\end{figure}

Fig. \ref{fig_simulation_antenna} demonstrates the PE performances of the two EGVP scheme given different number of the BS antennas ${N_t} \in \left\{ {4,8,16,64,128,256} \right\}$ and different speeds of the UEs. In the EGVP-FMPP scheme, the PE decreases as $N_t$ increases. However, the PE of the EGVP scheme remains relatively stable with respect to $N_t$. Furthermore, the results demonstrate that both EGVP schemes are applicable for a relatively small $N_t$ which is more applicable in practical systems.  
\begin{figure}[!t]
\centering
\includegraphics[width=3.2in]{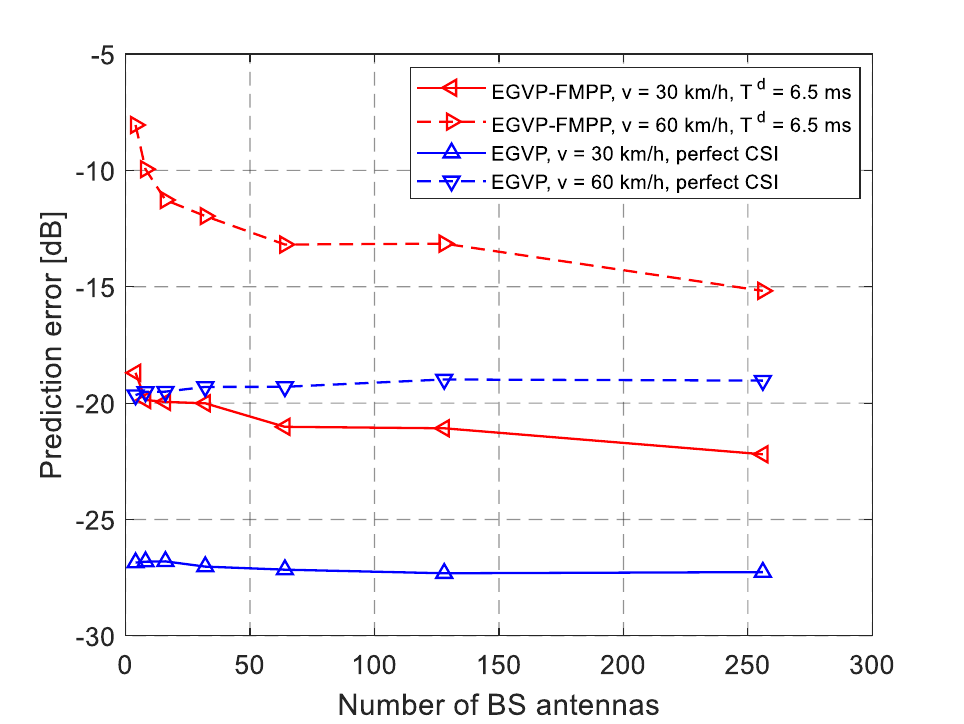}
\caption{PE performances vs the number of antennas with noise-free channel samples, $L_{\rm{ce}}=3$.}
\vspace{-0.3cm}
\label{fig_simulation_antenna}
\end{figure}

Then, we evaluate the PE performances given different SVD cycle lengths $T_{\rm{svd}}$ in Fig. \ref{fig_simulation_evd_cyc}. The PE performance of all schemes decreases slowly with $T_{\rm{svd}}$. Both EGVP schemes show distinct PE advantage over all benchmarks. No doubt a longer SVD cycle length leads to lower computation complexity yet larger PE. Therefore, the trade-off between the SVD cycle length and computation complexity can be tricky and the equation \eqref{sampling constraint} is helpful in this regard.
\begin{figure}[!t]
\centering
\includegraphics[width=3.2in]{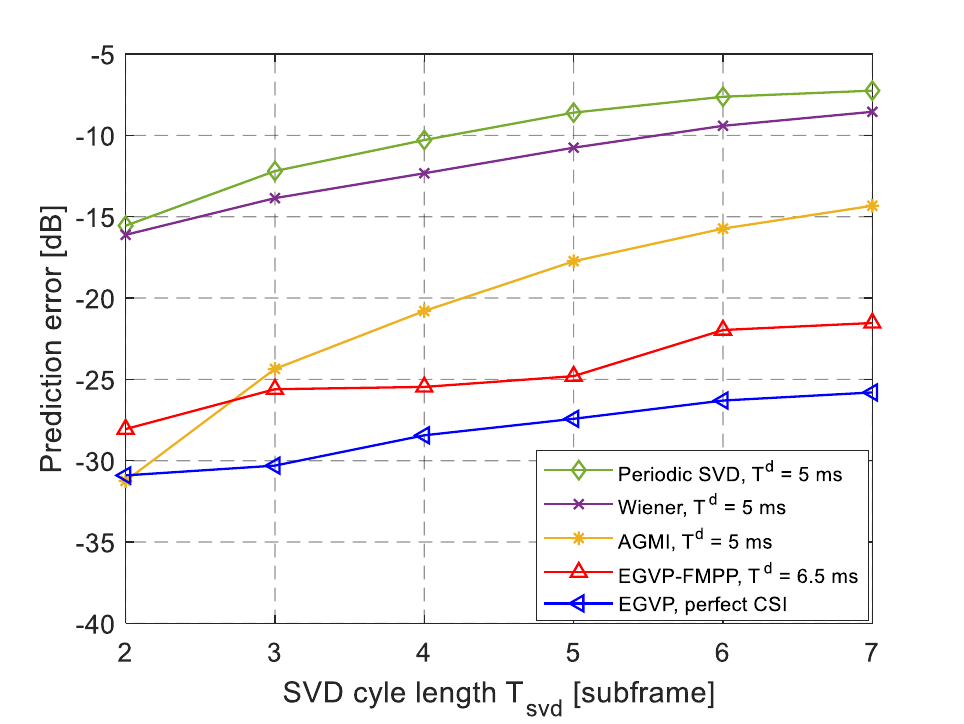}
\caption{PE performances of the eigenvectors vs different lengths of the SVD cycle with noise-free channel samples, $v=30$ km/h, $L_{\rm{ce}}=3$.}
\vspace{-0.3cm}
\label{fig_simulation_evd_cyc}
\end{figure}

\begin{figure}[!t]
\centering
\includegraphics[width=3.2in]{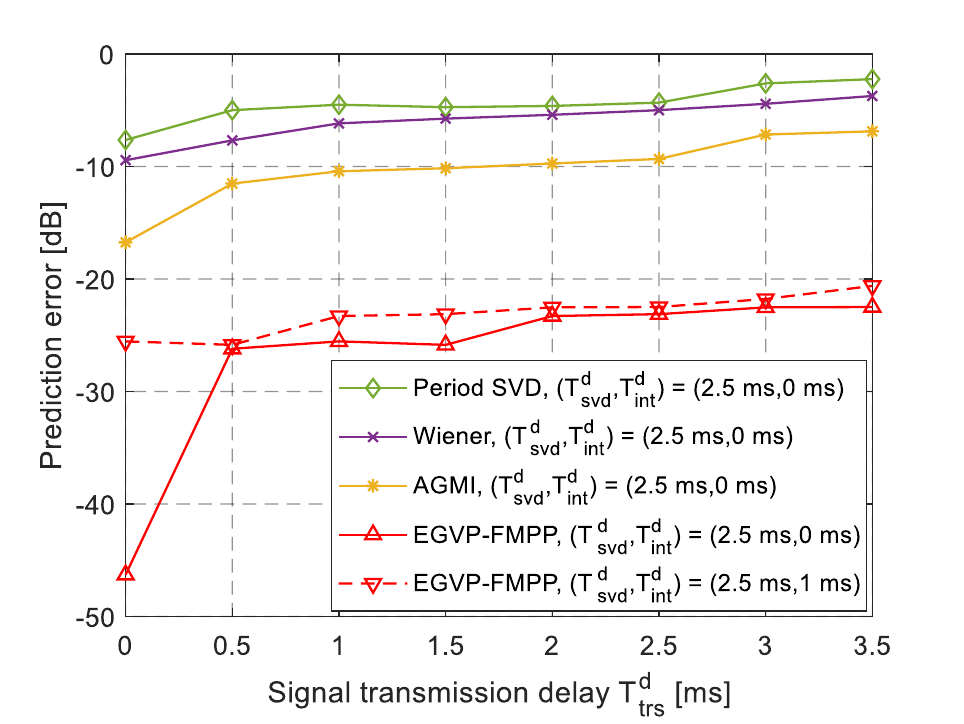}
\caption{PE performances of the eigenvectors vs different CSI delays with noise-free channel samples, $v=30$ km/h.}
\vspace{-0.3cm}
\label{fig_simulation_csiDelay}
\end{figure}
Fig. \ref{fig_simulation_csiDelay} illustrates the PE performances of EGVP-FMPP scheme and the benchmarks under different CSI delays. The benchmarks consider only the transmission delay $T_{{\text{trs}}}^d$ and the SVD delay $T_{{\text{svd}}}^d$ whereas EGVP-FMPP considers the additional interpolation delay $T_{{\text{int}}}^d$. It can be observed that as the CSI delay increases, the PE performance of all schemes deteriorates. However, EGVP-FMPP demonstrates superior PE performance compared to the benchmarks.



\begin{figure}[!t]
\centering
\includegraphics[width=3.2in]{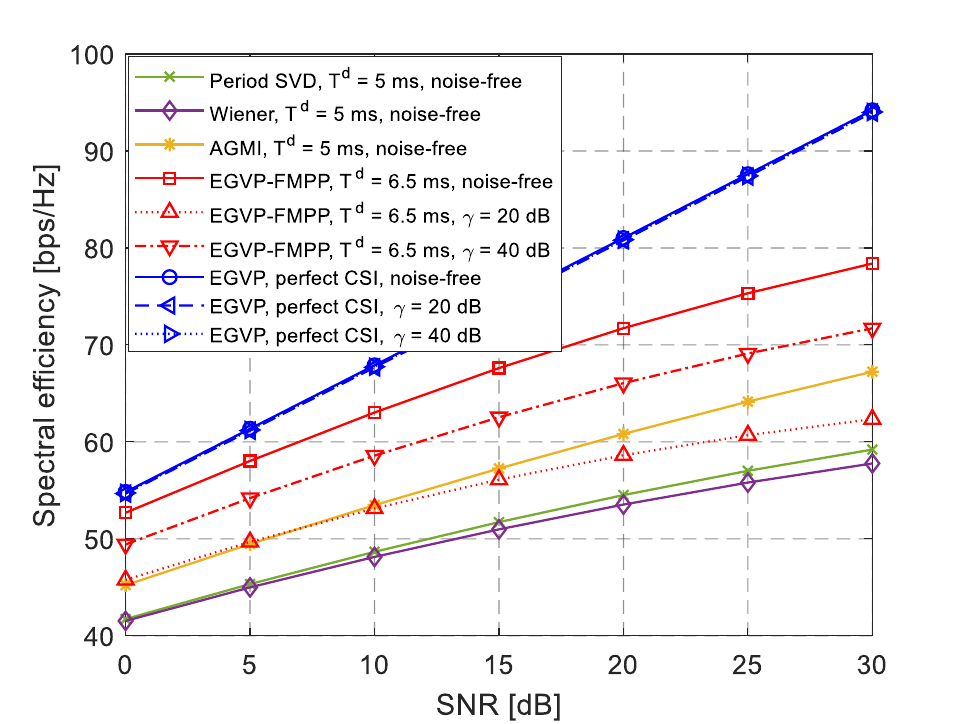}
\caption{SE performances with noisy channel samples, $v$ = 30 km/h, $\gamma$  = {20 dB, 40 dB}, $L_{\rm{ce}}=3$.}
\vspace{-0.3cm}
\label{fig_simulation_noise}
\end{figure}
In the end, we should note that the above results are based on the assumption of the noise-free channel samples. In reality, noisy sampling is inevitable and no doubt deteriorates the performances of our schemes. Many classic noise-canceling methods can be applied in this scenario, such as Prony \cite{1984PronyNoise}, minimum decription length (MDL) \cite{2009mdlNoise}, total linear squares (TLS) \cite{1987TLS} and Wiener filtering \cite{2006wienerNoise}. The detailed discussion is beyond the scope of our paper and the MDL method \cite{2009mdlNoise} is applied here to detect signals and suppress noise. Given two different channel sampling SNR $\gamma$, the SE performances of our schemes are illustrated in Fig. \ref{fig_simulation_noise}. It can be concluded that the channel sampling noise has limited impact on the EGVP scheme with  perfect CSI. However, it deteriorates the SE of the EGVP-FMPP scheme, especially when the noise power is large. Fortunately, the channel sampling SNR $\gamma$ = 40 dB is sufficient for the EGVP-FMPP scheme to outperform the benchmarks with noise-free sampling.

\section{Conclusion}\label{sec7}
This paper proposed a precoding matrix prediction method called EGVP in massive MIMO systems with mobility. Based on several periodic eigenvector samples of the downlink channels, the precoders at the BS were predicted by a polynomial-complexity channel weight interpolation and channel prediction. It was enabled by the decomposition of the precoders into a linear combinations of the CSI and the corresponding channel weights. The channel weights were interpolated by a complex exponential model. Considering the CSI delay, we further proposed a EGVP-FMPP method with an additional low-complexity channel prediction procedure. The asymptotic analyses proved that an error-free prediction can be achieved in our proposed EGVP schemes. The complexity advantage of our scheme over the traditional channel prediction schemes and the other precoding matrix prediction benchmarks were analyzed. Based on the industrial channel model, our schemes showed substantial improvement in terms of SE with low complexity in mobile environments.


%

\appendices

\section{Proof of Proposition \ref{Proposition eigen decomposition}}\label{Appendix Proposition eigen decompose}
The eigenvectors ${\overline {\bf{u}} _m}\left( t \right)$ obtained from the SVD of the channel matrix ${\bf{H}}\left( t \right)$ equal to the eigenvectors of ${\bf{H}}\left( t \right){\bf{H}}{\left( t \right)^T}$
\begin{small}
    \begin{equation}\label{eigen equation equivalent}
      \left( {\sum\limits_{j = 1}^M {{{\bf{h}}_j}\left( t \right){{\bf{h}}_j}{{\left( t \right)}^H}} } \right) \cdot {\overline {\bf{u}} _m}\left( t \right) = {\lambda _m}\left( t \right){\overline {\bf{u}} _m}\left( t \right).
  \end{equation}  
\end{small}
The left hand side of the above equation can be written as 
\begin{small}
    \begin{equation}\label{LHS eigen equation equivalent}
    \left( {\sum\limits_{j = 1}^M {{{\bf{h}}_j}\left( t \right){{\bf{h}}_j}{{\left( t \right)}^H}} } \right) {\overline {\bf{u}} _m}\left( t \right) = \sum\limits_{j = 1}^M {{{\bf{h}}_j}\left( t \right)\sum\limits_{n = 1}^{{N_t}{N_f}} {{h_{n,j}}\left( t \right){{\bar u}_{n,m}}\left( t \right)} } ,
    \nonumber
\end{equation}
\end{small}
where ${{h_{n,j}}\left( t \right)}$ and ${{\overline u }_{n,m}} \left( t \right)$ are the $n$-th element of ${{{\bf{h}}_j}\left( t \right)}$ and ${\overline {\bf{u}} _m}\left( t \right)$, respectively. Denote the channel weight as 
\begin{equation}\label{weight defination}
    {a_{m,j}}\left( t \right) = \frac{1}{{{\chi _m}\left( t \right)}}\sum\limits_{n = 1}^{{N_tN_f}} {{h_{n,j}}\left( t \right){{\overline u }_{n,m}} \left( t \right)} .
\end{equation}
Substitute the weight \eqref{weight defination} into the equation \eqref{eigen equation equivalent} and obtain the following equation
\begin{equation}\label{eigen decompostion final}
    {\overline{\bf{u}}_m}\left( t \right) = \sum\limits_{j = 1}^M {{a_{m,j}}\left( t \right){{\bf{h}}_j}} \left( t \right).
\end{equation}
Thus, Proposition \ref{Proposition eigen decomposition} is proved.
\section{Proof for Lemma \ref{Lemma of relevance model}}\label{Appendix for lemma of relevance model}
Based on the channel sparsity in the angle-delay domain, the $n$-th amplitude of the vector ${{\bf{g}}_m}\left( t \right)$, i.e., ${{\eta _{m,n}}\left( t \right)}$ can be approximately fitted by a complex exponential model as mentioned in \cite{2020yinMobility,2022ziao}
\begin{equation}\label{complex amplitude approximation model}
    {\eta _{m,n}}\left( t \right) \approx  \sum\limits_{q = 1}^{Q_{m,n}} {{c_{m,n,q}}{e^{j{\omega _{m,n,q}}t}}}, 
\end{equation}
where $Q_{m,n}$ is the number of exponentials and ${c_{m,n,q}}$ is the complex amplitude corresponding to the exponential ${e^{j{\omega _{m,n,q}}t}}$. The DFT leakage problem of the projection matrix ${\bf{W}}\left( {{N_f},{N_t}} \right)$ limits the approximation accuracy. Each angle-delay vector maps a corresponding angle-delay signature ${\bf{d}}_m$
\begin{equation}\label{asymptotic path analysis}
    \mathop {\lim }\limits_{{N_t}{N_f} \to \infty } {{\bf{w}}_n} = \mathop {\lim }\limits_{{N_t}{N_f} \to \infty } \frac{{{{\bf{d}}_m}\left( {{\theta _p},{\varphi _p},{\tau _p}} \right)}}{{\sqrt {{N_t}{N_f}} }} ,\forall p \in {{\cal P}},n \in {{\cal N}},
\end{equation}
where the set $\cal{P}$ and $\cal{N}$ are the multi-path index set and the angle-delay vector index set, respectively. Moreover, in asymptotic condition, each angular frequency ${{\omega _{m,n}}}$ corresponds to the Doppler frequency ${{\omega _{m,p}}}$ of the $p$-th path, i.e, $Q_{m,n}=1$ \cite{2022ziao}. This means that the complex amplitude $c_{m,n}$ maps the large scale channel parameter $\beta_{m,p}$ of the $p$-th path. Therefore, the asymptotic approximation model of \eqref{complex amplitude approximation model} is
\begin{equation}\label{asmptotic complex amplitude model}
    {\eta _{m,n}}\left( t \right) = {c_{m,n}}{e^{j{\omega _{m,n}}t}} \buildrel \Delta \over = \frac{{{\beta _{m,p}}{e^{j{\omega _{m,p}}t}}}}{{\sqrt {{N_t}{N_f}} }}. 
\end{equation}

Based on the channel in angle-delay domain equation \eqref{wideband band in angle-delay path form}, the channel inner product ${s_{i,j}}\left( t \right)$ satisfies 
\begin{equation}\label{relevance result}
\begin{array}{l}
{s_{i,j}}\left( t \right) = \left\langle {{{\bf{h}}_i}\left( t \right),{{\bf{h}}_j}\left( t \right)} \right\rangle \\
 = \left( {\sum\limits_{n = 1}^{{N_t}{N_f}} {{\eta _{j,n}}{{\left( t \right)}^H}{{\bf{w}}_n}^H} } \right)\left( {\sum\limits_{n = 1}^{{N_t}{N_f}} {{\eta _{j,n}}\left( t \right){{\bf{w}}_n}} } \right)\\
 = \sum\limits_{n = 1}^{{N_t}{N_f}} {{\eta _{j,n}}{{\left( t \right)}^H}{\eta _{i,n}}\left( t \right)} 
\end{array}.
\end{equation}
Substitute ${\eta _{m,n}}\left( t \right)$ with the equation \eqref{asmptotic complex amplitude model}, then the inner product ${s_{i,j}}\left( t \right)$ is
\begin{equation}\label{aysmptotic relevance model}
{s_{i,j}}\left( t \right) = \sum\limits_{n = 1}^{{N_f}{N_t}} {{c_{j,n}}^H{e^{ - j{\omega _{j,n}}t}}{c_{i,n}}{e^{j{\omega _{i,n}}t}} = \sum\limits_{x \in {{\cal{X}}_{i,j}}} {\rho _{i,j}^{\left( x \right)}{e^{j\omega _{i,j}^{\left( x \right)}t}}} ,} 
\nonumber
\end{equation}
where the set ${{\cal{X}}_{i,j}}$ denotes the indices collection of non-zero amplitudes. The size of ${{\cal{X}}_{i,j}}$ is denoted by $\left| {{\cal X}_{i,j}} \right| = {N_{{\rm{no}},i,j}} \le {N_f}{N_t}$. The non-zero complex amplitude is ${\rho _{i,j}^{\left( x \right)}}$ and the corresponding exponential is ${{e^{j\omega _{i,j}^{\left( x \right)}t}}}$. 

Therefore, Lemma \ref{Lemma of relevance model} is proved.
\section{Proof of Theorem \ref{theorem weight prediction model}}\label{Appendix Theroem weight predicion model}
Based on the channel weight definition \eqref{weight defination}, we rewrite it 
\begin{equation}\label{weight defination equal}
    {a_{m,j}}\left( t \right) = \frac{{\left\langle {{\overline{\bf{u}}_m}\left( t \right),{{\bf{h}}_j}\left( t \right)} \right\rangle }}{{{\chi _m}\left( t \right)}} = \frac{{{{\bf{h}}_j}{{\left( t \right)}^H}\sum\limits_{i = 1}^M {{a_{m,i}}\left( t \right){{\bf{h}}_i}\left( t \right)} }}{{{\chi _m}\left( t \right)}}.
\end{equation}
Substitute the channel ${{{\bf{h}}_j}\left( t \right)}$ in \eqref{weight defination equal} with \eqref{wideband band in angle-delay path form} 
\begin{equation}\label{weight defination equal 2}
    \begin{array}{l}
{a_{m,j}}\left( t \right) = \frac{{\sum\limits_{i = 1}^M {\left( {\sum\limits_{n = 1}^{{N_t}{N_f}} {{\eta _{j,n}}{{\left( t \right)}^H}{\eta _{i,n}}\left( t \right)} } \right){a_{m,i}}\left( t \right)} }}{{{\chi _m}\left( t \right)}}\\
 = \frac{{\sum\limits_{i = 1}^M {{s_{i,j}}\left( t \right){a_{m,i}}\left( t \right)} }}{{{\chi _m}\left( t \right)}}
\end{array}.
\end{equation}
Then, rewrite the above equation in a matrix form
\begin{equation}\label{relevance eigen equation}
    {\bf{S}}\left( t \right)\left[ {\begin{array}{*{20}{c}}
{{a_{1,m}}\left( t \right)}\\
{{a_{2,m}}\left( t \right)}\\
 \cdots \\
{{a_{M,m}}\left( t \right)}
\end{array}} \right] = {\chi _m}\left( t \right)\left[ {\begin{array}{*{20}{c}}
{{a_{1,m}}\left( t \right)}\\
{{a_{2,m}}\left( t \right)}\\
 \cdots \\
{{a_{M,m}}\left( t \right)}
\end{array}} \right],
\end{equation}
where ${\bf{S}}\left( t \right)$ is the collection of all channel inner products
\begin{equation}\label{relevane matrix definition}
    {\bf{S}}\left( t \right) = \left[ {\begin{array}{*{20}{c}}
{{s_{1,1}}\left( t \right)}&{{s_{1,2}}\left( t \right)}& \cdots &{{s_{1,M}}\left( t \right)}\\
{{s_{2,1}}\left( t \right)}&{{s_{2,2}}\left( t \right)}& \cdots &{{s_{2,M}}\left( t \right)}\\
 \cdots & \cdots & \cdots & \cdots \\
{{s_{M,1}}\left( t \right)}&{{s_{M,2}}\left( t \right)}& \cdots &{{s_{M,M}}\left( t \right)}
\end{array}} \right],
\end{equation}
where the element ${s_{i,j}}\left( t \right)$ is defined in \eqref{channel inner product DFT}. Obviously, the equation \eqref{relevance eigen equation} is the $m$-th eigen equation of matrix ${\bf{S}}\left( t \right)$. It shares the same eigenvalue $\chi_m\left(t\right)$ with the matrix ${\bf{ H}}\left( t \right)$. At the same time, the corresponding eigenvector is the collection of the channel weights
\begin{equation}\label{weigth vector definition}
    {{\bf{a}}_m}\left( t \right) = {\left[ {\begin{array}{*{20}{c}}
{{a_{1,m}}\left( t \right)}&{{a_{2,m}}\left( t \right)}& \cdots &{{a_{M,m}}\left( t \right)}
\end{array}} \right]^T}.
\end{equation}
Due to the definition of ${s_{i,j}}\left( t \right)$ in \eqref{channel inner product DFT}, the matrix ${\bf{S}}\left( t \right)$ is a Hermitian matrix. And the eigenvalue $\chi_m\left(t\right)$ is non-zero. Therefore, the matrix ${\bf{S}}\left( t \right)$ is definite and non-singular.

The eigenvalue $\chi_m\left(t\right)$ is obtained by solving 
\begin{equation}\label{solving eigen equation}
   \left( {{\bf{S}}\left( t \right) - {\chi _m}\left( t \right){{\bf{I}}_M}} \right){{\bf{a}}_m}\left( t \right) = \bf{0},
\end{equation}
where ${{\bf{I}}_M}$ is a $M$-dimension unit matrix. 

We first let $M=2$ and focus on obtaining the first eigenvector ${{\bf{a}}_1}\left( t \right)$. The other channel weight solution  ${{\bf{a}}_2}\left( t \right)$ can be similarly proved. Rewrite the equation \eqref{solving eigen equation} as
\begin{equation}\label{sovling eigen equation layer two}
   \left[ {\begin{array}{*{20}{c}}
{{s_{1,1}}\left( t \right) - {\chi _1}\left( t \right)}&{{s_{1,2}}\left( t \right)}\\
{{s_{2,1}}\left( t \right)}&{{s_{2,2}}\left( t \right) - {\chi _1}\left( t \right)}
\end{array}} \right]{{\bf{a}}_m}\left( t \right) = \bf{0}.
\end{equation}
Using Gaussian elimination, turn the the left hand side matrix in \eqref{sovling eigen equation layer two} into an upper triangular matrix $\widetilde {\bf{S}}\left( t \right)$
\begin{equation}\label{sovling eigen equation layer two equals}
    \left[ {\begin{array}{*{20}{c}}
{{{{ S}}_{1,1}}\left( t \right)}&{{{{S}}_{1,2}}\left( t \right)}\\
0&{{{{S}}_{2,2}}\left( t \right)}
\end{array}} \right]{{\bf{a}}_m}\left( t \right) = {\bf{0}},
\end{equation}
where each element of $\widetilde {\bf{S}}\left( t \right)$ is denoted by
\begin{equation}\label{sovling eigen equation layer two upper triangular matrix}
\left\{ \begin{array}{l}
{{{S}}_{1,1}}\left( t \right) = {s_{1,1}}\left( t \right) - {\chi _1}\left( t \right),\\
{{{S}}_{1,2}}\left( t \right) = {s_{1,2}}\left( t \right),\\
{{{S}}_{2,2}}\left( t \right) = {s_{2,2}}\left( t \right) - {\chi _1}\left( t \right) - \frac{{{{\left| {{{{S}}_{1,2}}\left( t \right)} \right|}^2}}}{{{{{S}}_{1,1}}\left( t \right)}}.
\end{array} \right.
\end{equation}
As ${\bf{S}}\left( t \right)$ is non-singular, the rank of the matrix $\widetilde {\bf{S}}\left( t \right)$ is 1. Therefore, the eigen equation \eqref{sovling eigen equation layer two equals} satisfies
\begin{equation}\label{sovling eigen equation layer two results}
\left\{ \begin{array}{l}
{a_{1,1}}\left( t \right){{{S}}_{1,1}}\left( t \right) + {a_{2,1}}\left( t \right){{{S}}_{1,2}}\left( t \right) = 0,\\
{{{S}}_{2,2}}\left( t \right) = 0,{{{S}}_{1,2}}\left( t \right) \ne 0,
\end{array} \right.
\end{equation}
which equals to
\begin{equation}\label{sovling eigen equation layer two final results one }
   {a_{1,1}}\left( t \right) =  - \frac{{{s_{1,2}}\left( t \right)}}{{{s_{1,1}}\left( t \right) - {\chi _1}\left( t \right)}}{a_{2,1}}\left( t \right), 
\end{equation}
\begin{equation}\label{sovling eigen equation layer two final results two}
   \left( {{s_{2,2}}\left( t \right) - {\chi _1}\left( t \right)} \right)\left( {{s_{1,1}}\left( t \right) - {\chi _1}\left( t \right)} \right) = {\left| {{s_{1,2}}\left( t \right)} \right|^2}.
\end{equation}
Solve the equation \eqref{sovling eigen equation layer two final results two} and obatin the eigenvalue 
\begin{small}
\begin{equation}\label{sovling eigen equation layer two eigenvalue}
{\chi _1}\left( t \right){\rm{ = }}\frac{{{s_{1,1}}\left( t \right) + {s_{2,2}}\left( t \right) \pm \sqrt {{{\left( {{s_{1,1}}\left( t \right) - {s_{2,2}}\left( t \right)} \right)}^2} + 4{{\left| {{s_{1,2}}\left( t \right)} \right|}^2}} }}{2}.
\end{equation}
\end{small}
As the first eigenvalue $\chi_1\left(t\right)$ is the greater one, we take the greater value of \eqref{sovling eigen equation layer two eigenvalue}. Then the equation \eqref{sovling eigen equation layer two final results one } becomes
\begin{small}
 \begin{equation}\label{sovling eigen equation layer two eigen vector relation}
\left\{ \begin{array}{l}
{a_{1,1}}\left( t \right) = {{{\cal K}}_{1,2}}\left( t \right){a_{2,1}}\left( t \right),\\
{{{\cal K}}_{1,2}}\left( t \right) = \frac{{2{s_{1,2}}\left( t \right)}}{{\sqrt {{{\left( {{s_{1,1}}\left( t \right) - {s_{2,2}}\left( t \right)} \right)}^2} + 4{{\left| {{s_{1,2}}\left( t \right)} \right|}^2}}  + \left( {{s_{2,2}}\left( t \right) - {s_{1,1}}\left( t \right)} \right)}},
\end{array} \right.
\end{equation}   
\end{small}
where ${\cal K}_{1,2}\left(t\right)$ is the ratio of ${a_{1,1}}\left( t \right)$ to ${a_{2,1}}\left( t \right)$. 

Then, we analyze the asymptotic value of ${\cal K}_{1,2}\left(t\right)$. Let the normalized denominator of ${\cal K}_{1,2}\left(t\right)$ be 

\begin{equation}\label{sovling eigen equation layer two ratio denominator}
    \begin{array}{l}
{D_{1,2}}\left( t \right) = \sqrt {\frac{{{{\left( {{s_{1,1}}\left( t \right) - {s_{2,2}}\left( t \right)} \right)}^2} + 4{{\left| {{s_{1,2}}\left( t \right)} \right|}^2}}}{{{N_t}^2{N_f}^2}}} \\
 + \frac{{\left( {{s_{2,2}}\left( t \right) - {s_{1,1}}\left( t \right)} \right)}}{{{N_t}{N_f}}}.
\end{array}
\end{equation}  

Based on Assumption \ref{assumption for asymptotic analysis}, we have 
\begin{equation}\label{sovling eigen equation layer two asympototic ratio}
  \begin{array}{l}
\mathop {\lim }\limits_{{N_t}{N_f} \to \infty } {{\cal{K}}_{1,2}}\left( t \right) = \frac{{\frac{{{s_{1,2}}\left( t \right)}}{{{N_t}{N_f}}}}}{{{D_{1,2}}\left( t \right)}}\\
 = \frac{{\frac{{{s_{1,2}}\left( t \right)}}{{{N_t}{N_f}}}}}{{\sqrt {{{\left( {{\cal{A}} - {\cal{B}}} \right)}^2} + {{\cal{C}}^2}}  + {\cal{A}} - {\cal{B}}}}\\
\end{array}. 
\end{equation}
And the absolute value of ${\cal K}_{1,2}\left(t\right)$ converges to
\begin{small}
  \begin{equation}\label{abs of sovling eigen equation layer two asympototic ratio}
    \mathop {\lim }\limits_{{N_t}{N_f} \to \infty } \left| {{{\cal K}}_{1,2}\left( t \right)} \right| = \frac{{\cal{C}}}{{\sqrt {{{\left( {{\cal{A}} - {\cal{B}}} \right)}^2} + {{\cal{C}}^2}}  + {\cal{A}} - {\cal{B}}}}.
\end{equation}  
\end{small}
Learned from \eqref{sovling eigen equation layer two asympototic ratio} and \eqref{abs of sovling eigen equation layer two asympototic ratio}, ${\cal K}_{1,2}\left(t\right)$ can be modeled by ${s_{1,2}}\left( t \right)$ multiplied by a time-invariant scaling factor.

However, the eigenvector ${{\bf{a}}_1}\left( t \right)$ is non-unique. Without losing generality, we first consider a particular solution 
\begin{equation}\label{first channel weight value assumption}
\overline a_{2,1}\left( t \right) = {{\cal U}}{s_{1,2}}\left( t \right),   
\end{equation}
where $\cal U$ is a time-invariant scaling factor. Learned from \eqref{sovling eigen equation layer two eigen vector relation} and Lemma \ref{Lemma of relevance model}, the particular solution $\overline a_{1,1}\left( t \right)$ is
\begin{equation}\label{second channel weight value}
    \begin{array}{l}
\mathop {\lim }\limits_{{N_t}{N_f} \to \infty } {\overline a_{1,1}}\left( t \right) = {{\cal{K}}_{1,2}}\left( t \right){\cal{U}}{s_{1,2}}\left( t \right)\\
 = \widetilde {{\cal U}}{s_{1,2}}{\left( t \right)^2} \buildrel \Delta \over = \sum\limits_{l = 1}^{L_{i,j}} {b_{1,1}^{\left( l \right)}{e^{j\omega _{1,1}^{\left( l \right)}t}}} 
\end{array}, 
\end{equation}
where, $\widetilde {{\cal U}}$ is a time-invariant scaling factor. Thus, the particular solution of channel weight ${{\bf{\bar a}}_{1,1}}\left( t \right)$ can be modeled by a complex exponential model associated with ${s_{1,2}}\left( t \right)$.

Likewise, the general solution of the channel weight $a_{2,1}\left( t \right)$ is also estimated by a complex exponential model
\begin{equation}\label{first channel weight value general assumption}
    \mathop {\lim }\limits_{{N_t}{N_f} \to \infty }{a_{2,1}}\left( t \right) \buildrel \Delta \over = \sum\limits_{l = 1}^{L_{i,j}} {b_{2,1}^{\left( l \right)}{e^{j\omega _{2,1}^{\left( l \right)}t}}}.
\end{equation}
According to \eqref{sovling eigen equation layer two eigen vector relation}, both $a_{2,1}\left( t \right)$ and $a_{1,1}\left( t \right)$ can also be modeled by a complex exponential model which may be irrelevant with ${s_{1,2}}\left( t \right)$. 

Even though the above derivation is made when $M=2$, this limitation can be relaxed to $M>2$. In this case, an iterative form of channel weight is also proven to be asymptotically estimated by a complex exponential model.

When $M>2$,
perform an elementary row operation and turn the matrix ${{\bf{S}}\left( t \right) - {\chi _1}\left( t \right){{\bf{I}}_M}}$  to an upper triangular matrix $\widetilde{\bf{S}}\left(t\right)$. First, define an iterative matrix $\widetilde{\bf{S}}^I\left(t\right),I\in\left\{0,1,2,\cdots,M-1\right\}$ as the result of the $I$-th elementary row operation of ${{\bf{S}}\left( t \right) - {\chi _1}\left( t \right){{\bf{I}}_M}}$. And the initial matrix is $\widetilde{\bf{S}}^{0}\left(t\right)={{\bf{S}}\left( t \right) - {\chi _1}\left( t \right){{\bf{I}}_M}}$. Obviously, the last iterative matrix $\widetilde{\bf{S}}^{M-1}\left(t\right)$ equals to the matrix $\widetilde{\bf{S}}\left(t\right)$
\begin{small}
\begin{equation}\label{general case matrix of upper triangular matrix}
\left[ {\begin{array}{*{20}{c}}
{S_{1,1}^{M{\rm{ - }}1}\left( t \right)}&{S_{1,2}^{M{\rm{ - }}1}\left( t \right)}& \cdots &{S_{1,M}^{M{\rm{ - }}1}\left( t \right)}\\
0&{S_{2,2}^{M{\rm{ - }}1}\left( t \right)}& \cdots &{S_{2,M}^{M{\rm{ - }}1}\left( t \right)}\\
 \cdots & \cdots & \cdots & \cdots \\
0& \cdots &{S_{M{\rm{ - }}1,M{\rm{ - }}1}^{M{\rm{ - }}1}\left( t \right)}&{S_{M{\rm{ - }}1,M}^{M{\rm{ - }}1}\left( t \right)}\\
0&0& \cdots &{S_{M,M}^{M{\rm{ - }}1}\left( t \right)}
\end{array}} \right].
\end{equation}    
\end{small}

Then, we aim to derive the iterative form of the elements of \eqref{general case matrix of upper triangular matrix} instead of the closed-form like  \eqref{sovling eigen equation layer two eigen vector relation}. The elements of the initial matrix $S_{i,j}^0\left( t \right)$ satisfy
\begin{equation}\label{inital elements of upper triangular matrix iteration 1}
   S_{i,j}^0\left( t \right) = \left\{ \begin{array}{l}
{s_{i,j}}\left( t \right),i \ne j,\\
{s_{i,j}}\left( t \right) - {\chi _1}\left( t \right),i = j.
\end{array} \right.
\end{equation}
When $I=1$, $\widetilde{\bf{S}}^{1}\left(t\right)$ becomes
\begin{small}
 \begin{equation}\label{general case matrix of upper triangular matrix iteration 1}
    \left[ {\begin{array}{*{20}{c}}
{S_{1,1}^1\left( t \right)}&{S_{1,2}^1\left( t \right)}& \cdots &{S_{1,M}^1\left( t \right)}\\
0&{S_{2,2}^1\left( t \right)}& \cdots &{S_{2,M}^1\left( t \right)}\\
 \cdots & \cdots & \cdots & \cdots \\
0& \cdots &{S_{M{\rm{ - }}1,M{\rm{ - }}1}^1\left( t \right)}&{S_{M{\rm{ - }}1,M}^1\left( t \right)}\\
0&{S_{M,2}^1\left( t \right)}& \cdots &{{S_{M,M}}\left( t \right)}
\end{array}} \right],
\end{equation}   
\end{small}
where the elements satisfy
\begin{equation}\label{elements of upper triangular matrix iteration 1}
S_{i,j}^1\left( t \right) = \left\{ \begin{array}{l}
S_{i,j}^0\left( t \right),i \le 1,\\
S_{i,j}^0\left( t \right) - \frac{{S_{i,1}^0\left( t \right)}}{{S_{1,1}^0\left( t \right)}}S_{1,j}^0\left( t \right),1 < i \le j,\\
0,i > j.
\end{array} \right.
\end{equation}
The indices $i$ and $j$ denote the $i$-th row and $j$-column of $\widetilde{\bf{S}}\left(t\right)$, respectively. 
When $I\ge1$, the general form of the elements of $\widetilde{\bf{S}}^{I}\left(t\right)$ are
\begin{small}
    \begin{equation}\label{general case elements of upper triangular matrix iteration}
S_{i,j}^I\left( t \right) = \left\{ \begin{array}{l}
S_{i,j}^{I - 1}\left( t \right),i \le I,\\
S_{i,j}^{I - 1}\left( t \right) - \frac{{S_{i,I}^{I - 1}\left( t \right)}}{{S_{I,I}^{I - 1}\left( t \right)}}S_{I,j}^{I - 1}\left( t \right),i > I,j > I,\\
0,j \le I < i.
\end{array} \right.
\end{equation}
\end{small}
Based on the iterative equation \eqref{general case elements of upper triangular matrix iteration} and initial condition \eqref{inital elements of upper triangular matrix iteration 1}, the eigen equation  $\widetilde{\bf{S}}\left( t \right){{\bf{a}}_m}\left( t \right) = {\bf{0}}$ is equally rewritten as
\begin{equation}\label{gerneral sovlution of upper triangular matrix}
\sum\limits_{i = n}^M {S_{n,i}^{M - 1}\left( t \right){a_{i,m}}\left( t \right) = 0,n \in 1 \cdots M}, 
\end{equation}
where $n$ denotes the $n$-th row of $\widetilde{\bf{S}}\left( t \right)$. The solution to the series of equations \eqref{gerneral sovlution of upper triangular matrix} is   
\begin{equation}\label{gerneral iterative results of weight}
    {a_{n,m}}\left( t \right) =  - \sum\limits_{i = n}^{M - 1} {{{\cal K}}_{n,i + 1}^{M - 1}\left( t \right){a_{i + 1,m}}\left( t \right),n \in 1 \cdots M - 1} ,   
\end{equation}
where the ratio is defined by ${{\cal K}}_{i,j}^{M - 1}\left( t \right) = \frac{{S_{i,j}^{M-1}\left( t \right)}}{{S_{i,i}^{M-1}\left( t \right)}}$.
According to Appendix \ref{Appendix Theroem weight predicion model}, the solution ${{\bf{a}}_m}\left( t \right)$ is non-unique and ${a_{M,m}}\left( t \right)$ can be any possible value. 

The ratio ${{\cal K}}_{i,j}^{I - 1}\left( t \right)$ is first analyzed in order to prove the iterative property of the channel weight $a_{n,m}\left(t\right)$. We have proved that the ratio ${{{\cal K}}_{1,2}}\left( t \right) = \frac{{{S_{1,2}}\left( t \right)}}{{{S_{1,1}}\left( t \right)}}$ can be modeled by ${s_{1,2}}\left( t \right)$ multiplied by a time-invariant scaling factor. Likewise, when $I=1$, the ratio
${{\cal K}}_{i,j}^0\left( t \right) = \frac{{S_{i,j}^0\left( t \right)}}{{S_{i,i}^0\left( t \right)}}$ shares the same property. When $I>1$, 
based on the iterative equation
\begin{equation}\label{general case iterative equation of S}
    S_{i,j}^I\left( t \right) = S_{i,j}^{I - 1}\left( t \right) - {{\cal K}}_{I,j}^{I - 1}\left( t \right)S_{i,I}^{I - 1}\left( t \right),i > I,j > I,
\end{equation}
the next iteration $S_{i,j}^1\left( t \right)$ can be modeled by a complex exponential model like $S_{i,j}^0\left( t \right)$. As a result, the ratio ${\cal K}_{i,j}^1\left( t \right) = \frac{{S_{i,j}^1\left( t \right)}}{{S_{i,i}^1\left( t \right)}}$ can be also estimated by a complex exponential model, which means that all ratios ${{\cal K}}_{i,j}^{I - 1}\left( t \right)$ hold this conclusion.

In the end, a series of solutions ${{\bf{a}} _m}\left( t \right)$ can be obtained by solving \eqref{gerneral iterative results of weight} given the particular solution ${{\overline{a}}_{M,m}}\left( t \right) = {{\cal U}}{s_{M,m}}\left( t \right)$ or the general solution ${{a}_{M,m}}\left( t \right)$. Therefore, the weight ${a_{M,m}}\left( t \right)$ can be also modeled by a complex exponential model when $M>2$. 
Based on the above discussion about cases when $M=2$ and $M>2$, Theorem \ref{theorem weight prediction model} is proved. 
\section{Proof of Theorem \ref{theorem for asmptotic channel estimation}}\label{Appendix theorem for asmptotic channel estimation}
Given ${N_{{\rm{ce}}}} = 2$, the prediction order of FMPP is $L_{\rm{ce}}=1$. When ${N_t},{N_f} \to \infty$, Appendix \ref{Appendix for lemma of relevance model} has proved that each exponential ${{e^{j{\omega _{m,n}}}}}$ of the complex amplitude ${\eta _{m,n}}\left( t \right)$ corresponds to the Doppler frequency shift ${{e^{j{\omega _{m,p}}}}}$ of the channel $ {\bf{h}}_m\left( t \right)$. Based on \eqref{asmptotic complex amplitude model}, the asymptotic channel PE is 
\begin{small}
  \begin{equation}\label{asmptotic channel prediction error}
        \begin{array}{l}
        \mathop {\lim }\limits_{{N_t}{N_f} \to \infty }  \mathbb{E}\left\{ {\frac{{\left\| {{{\bf{h}}_m}\left( t \right) - {{\hat {\bf{h}}}_m}\left( t \right)} \right\|_2^2}}{{\left\| {{{\bf{h}}_m}\left( t \right)} \right\|_2^2}}} \right\}\\
         = \mathbb{E}\left\{ {\frac{{\left\| {\sum\limits_{p = 1}^P {{\beta _{p,m}}{e^{j{\omega _{p,m}}t}}{{\bf{d}}_m}}  - \sum\limits_{n = 1}^{{N_t}{N_f}} {{c_{m,n}}{e^{j{\omega _{m,n}}t}}{{\bf{w}}_n}} } \right\|_2^2}}{{\left\| {{{\bf{h}}_m}\left( t \right)} \right\|_2^2}}} \right\}\\
         = \mathbb{E}\left\{ {\frac{{\left\| {\sum\limits_{n \in {{{\cal N}}_{{\rm{nz}}}}} {\left( {\frac{{{\beta _{m,n}}{e^{j{\omega _{m,n}}t}}}}{{\sqrt {{N_f}{N_t}} }} - {c_{m,n}}{e^{j{\omega _{m,n}}t}}} \right){{\bf{w}}_n}} } \right\|_2^2}}{{\left\| {{{\bf{h}}_m}\left( t \right)} \right\|_2^2}}} \right\} = 0,
        \end{array}
    \end{equation}  
\end{small}    
where ${{{{\cal N}}_{{\rm{nz}}}}}$ is the collection of non-zero indices of the angle-delay vector ${\bf{w}}_n$. Thus, Theorem \ref{theorem for asmptotic channel estimation} is proved.
\section{Proof of Theorem \ref{theorem for asmptotic analysis of EGVP}}\label{Appendix theorem for asmptotic eigen prediction}
The eigenvector ${\overline {\bf{u}}_1 \left( t \right)}$ corresponding to the maximum eigenvalue is analyzed first. 
Based on \eqref{eigenvector decomposition} and \eqref{eigen vector reconstruction}, the eigenvector PE is equally rewritten as
    \begin{equation}\label{asymptotic prediction error equation}
        \begin{array}{l}
        \overline {\bf{u}}_1 \left( t \right) - {\hat {\bf{u}}_1}\left( t \right) = \sum\limits_{j = 1}^M {{{\bf{h}}_j}\left( t \right)\left( {{a_{1,j}}\left( t \right) - {{\hat a}_{1,j}}\left( t \right)} \right)} \\
         = {\bf{H}}{\left( t \right)^T}\left( {{{\bf{a}}_1}\left( t \right) - {{\hat {\bf{a}}}_1}\left( t \right)} \right).
        \end{array}
    \end{equation}
     The particular solution of the predicted channel weight is given in Appendix \ref{Appendix Theroem weight predicion model}. The closed form of the channel weight is first \eqref{first channel weight value assumption} proved, i.e., ${\hat {\bf{a}}_1}\left( t \right)  \buildrel \Delta \over =  {\overline {\bf{a}} _1}\left( t \right)$. Then the iterative form of the channel weight \eqref{gerneral iterative results of weight} can be similarly derived.
       
    According to Lemma \ref{Lemma of relevance model} and \eqref{channel inner product DFT}, the number of the exponentials in \eqref{weight prediction model} is $N_{{\rm{no}}}^{{\rm{max}}}$. Therefore, at least $2N_{\rm{no}}^{\rm{max}}$ samples ensure a successful prediction of the ${\hat {\bf{a}}_1}\left( t \right)$ 
    \begin{equation}\label{asymptotic weight prediction model}
        \begin{array}{l}
        \mathop {\lim }\limits_{{N_t}{N_f} \to \infty }{\hat {\bf{a}}_1}\left( t \right) = {\left[ {\begin{array}{*{20}{c}}
        {{{\hat a}_{1,1}}\left( t \right)}&{{{\hat a}_{1,2}}\left( t \right)}
        \end{array}} \right]^T}\\
        {\left[ {\begin{array}{*{20}{c}}
        {\sum\limits_{l = 1}^{N_{{\rm{no}}}^{\max }} {b_{1,1}^{\left( l \right)}{e^{j\omega _{1,1}^{\left( l \right)}t}}} }&{\sum\limits_{l = 1}^{N_{{\rm{no}}}^{\max }} {b_{1,2}^{\left( l \right)}{e^{j\omega _{1,2}^{\left( l \right)}t}}} }
        \end{array}} \right]^T}
        \end{array}.
    \end{equation}
    
    According to Theorem \ref{theorem weight prediction model}, each amplitude ${b_{1,2}^{\left( l \right)}}$ of ${{\hat a}_{1,2}}\left( t \right)$ maps the amplitude ${{\rho _{1,2,x}}}$ of ${s_{1,2}}\left( t \right)$. And the poles ${{e^{j\omega _{1,2}^{\left( l \right)}}}}$ of ${{\hat a}_{1,2}}\left( t \right)$ maps the poles ${e^{j{\omega _{1,2,x}}}}$ of ${s_{1,2}}\left( t \right)$. Lemma \ref{Lemma of relevance model} shows that $s_{1,2}\left (t\right)$ can be approximated by a $N_{\rm{no}}^{\rm{max}}$-order complex exponential model \eqref{channel inner product DFT}. Therefore, the interpolated channel weight satisfies $\mathop {\lim }\limits_{{N_t}{N_f}{ \to \infty }} {\hat a_{1,2}}\left( t \right){\rm{ }} = {{\cal U}}{s_{1,2}}\left( t \right)$. In terms of the estimation of ${{{\hat a}_{1,1}}\left( t \right)}$, based on Appendix \ref{Appendix for lemma of relevance model}, the poles ${{e^{j{\omega _{1,2,x}}}}}$ of $s_{1,2}\left (t\right)$ denotes the product of the non-orthogonal Doppler frequencies of channel ${\bf{h}}_1\left(t\right)$ and channel ${\bf{h}}_2\left(t\right)$. Thus the order of 
    ${\widetilde {{\cal U}}{s_{1,2}}{{\left( t \right)}^2}}$ is also $N_{\rm{no}}^{\rm{max}}$.
    Then the real channel weight ${\widetilde {{\cal U}}{s_{1,2}}{{\left( t \right)}^2}}$ can also be modeled by an $N_{\rm{no}}^{\rm{max}}$-order complex exponential model \eqref{channel inner product DFT}, which means that the estimated channel weight $\mathop {\lim }\limits_{{N_t}{N_f} \to \infty }{\hat a_{1,1}}\left( t \right) = \sum\limits_{l = 1}^{N_{{\rm{no}}}^{\max }} {b_{1,1}^{\left( l \right)}{e^{j\omega _{1,1}^{\left( l \right)}t}}}$ can map the real channel weight ${\widetilde {{\cal U}}{s_{1,2}}{{\left( t \right)}^2}}$. Therefore, the error-free channel weight estimation is achieved $\mathop {\lim }\limits_{{N_t}{N_f} \to \infty } {\hat {\bf{a}}_1}\left( t \right){\rm{ }} = {\overline {\bf{a}} _1}\left( t \right)$ as well as the eigenvector estimation $\mathop {\lim }\limits_{{N_t}{N_f} \to \infty } {\widehat {\bf{u}}_1}\left( t \right) = {\overline {\bf{u}} _1}\left( t \right)$. 
    
    Likewise, the above derivation is also valid for the other eigenvectors. In the end, Theorem \ref{theorem for asmptotic analysis of EGVP} is proved.



\ifCLASSOPTIONcaptionsoff
  \newpage
\fi



%
\bibliographystyle{IEEEtran}
\bibliography{IEEEabrv,reference}

\end{document}